\newcolumntype{C}[1]{>{\centering\let\newline\\\arraybackslash\hspace{0pt}}m{#1}}
 \newtheorem{lemma}{Lemma}
 \newtheorem{definition}{Definition}
 \newtheorem{theorem}{Theorem}
\newtheorem{proposition}{Proposition}
\newcommand\independent{\protect\mathpalette{\protect\independenT}{\perp}}
\def\independenT#1#2{\mathrel{\rlap{$#1#2$}\mkern2mu{#1#2}}}
\newcommand{\xx}{2.2}
\newcommand{\yy}{2}
\journal{}
\begin{document}

\begin{frontmatter}



\title{The diameter of a stochastic matrix: A new measure for sensitivity analysis in Bayesian networks}

 \author[label1]{Manuele Leonelli}
 \affiliation[label1]{organization={School of Science and Technology, IE University, Madrid}, country = {Spain}}

  \author[label2,label3]{Jim Q. Smith}
  \author[label2]{Sophia K. Wright}
 \affiliation[label2]{organization={Department of Statistics, University of Warwick, Coventry}, country = {UK}}
  \affiliation[label3]{organization={The Alan Turing Institute, London}, country = {UK}}
  

\begin{abstract}
Bayesian networks are one of the most widely used classes of probabilistic models for risk management and decision support because of their interpretability and flexibility in including heterogeneous pieces of information. 
In any applied modelling, it is critical to assess how robust the inferences on certain target variables are to changes in the model. 
In Bayesian networks, these analyses fall under the umbrella of sensitivity analysis, which is most commonly carried out by quantifying dissimilarities using Kullback-Leibler information measures.
In this paper, we argue that robustness methods based instead on the familiar total variation distance provide simple and more valuable bounds on robustness to misspecification, which are both formally justifiable and transparent.
We introduce a novel measure of dependence in conditional probability tables called the diameter to derive such bounds. This measure quantifies the strength of dependence between a variable and its parents.
We demonstrate how such formal robustness considerations can be embedded in building a Bayesian network.
\end{abstract}


\begin{keyword}
Bayesian networks \sep Edge strength \sep Sensitivity analysis \sep Total variation distance
\end{keyword}

\end{frontmatter}


\section{Introduction}

Bayesian networks (BNs) \citep[e.g][]{koller2009probabilistic,Smith2010} are now a well-established and widely-used AI modelling approach for a wide range of risk management applications. They support decision makers by providing an intuitive graphical framework to reason about the dependence of various risk factors and an efficient platform to perform inferential queries, scenario and sensitivity analyses. Their use as a decision support tool in business and OR has been increasing over the years, including case studies in project management \citep{van2020dependent}, supply chain \citep{garvey2015analytical}, marketing \citep{hosseini2021decision}, and logistics \citep{qazi2022adoption}, among others. 

BNs are defined by two components: a directed acyclic graph (DAG) where each node is a variable of interest and edges represent the, possibly causal, relationship between them; a conditional probability table (CPT) for each node of the DAG reporting the probability distribution of the associated variable conditional on its parents. BNs are highly interpretable due to their graphical nature, representing the probabilistic relationships between variables, making it easy for users to understand and trace the influence of one variable on another. With explainability now recognized as critical for the use of AI in applied research \citep{rudin2019stop}, including in OR \citep{de2023explainable}, BNs stand out by providing transparent and intuitive explanations, thereby enhancing trust and clarity in decision-making processes.

The underlying DAG and the associated CPTs can be learned from data using machine learning algorithms or elicited using experts' opinions and knowledge. There is now a vast amount of algorithms to learn BN from data \citep[e.g.][for a review]{scutari2019learns}, as well as various software implementing such routines \citep[most notably the \texttt{bnlearn} R package,][]{Scutari2010}. Furthermore, there are now protocols to guide the construction of BNs from the input of experts \citep{french2011aggregating,renooij2001probability,werner2017expert,wilkerson2021customized}.

No matter how the BN was constructed, in any real-world modelling with BNs, it is critical to assess the importance of various risk factors and evaluate the overall robustness of the model to misspecifications of its inputs. Such a step is usually referred to as \emph{sensitivity analysis} \citep{borgonovo2017sensitivity,razavi2021future} and is a fundamental ingredient of applied mathematical modelling in any area of science \citep{borgonovo2016sensitivity,saltelli2000sensitivity,saltelli2021sensitivity}. A variety of sensitivity methods for BNs have been introduced \citep[see e.g.][]{ballester2023yodo,chan2002numbers,Rohmer2020,van2007sensitivity} and implemented in various pieces of software, \citep[e.g. the \texttt{bnmonitor} R package,][]{leonelli2023sensitivity}.

Most sensitivity methods in BNs either use the KL-divergence or the Chan-Darwiche distance to quantify the dissimilarity between two BNs with different input parameters \citep{chan2005distance,leonelli2022geometric,renooij2014co}. However, these measures depend very heavily on the accurate specification of small probabilities since they are specified in log-probabilities and ratio of probabilities, respectively. In many applied scenarios, mainly when BNs are used as a decision support tool, the misspecification of improbable events has only a small impact on the required outputs of a decision analysis. For this reason, in this paper, we propose the use of a much more conventional distance measure \citep[widely used in probability theory and stochastic analysis,][]{sason2016f}, namely the \textit{total variation distance}. Although it is often difficult to derive explicit formulae for the impacts of deviation in variation, it is nevertheless straightforward to tightly bound such deviations, as we demonstrate below.

In particular, we introduce a novel measure, called \textit{diameter}, which summarizes the information carried by a CPT of the BN in variation distance. By amount of information, we broadly mean strength of dependence between a variable and its parents in the DAG, although the exact intuition varies depending on the type of analysis conducted, as we illustrate in the following sections. The diameter will be essential for constructing tight bounds to the effect of the model's perturbations derived from a suite of sensitivity studies.

Unlike almost all sensitivity methods available to BNs that require a complete specification of the whole BN, the methods proposed here can rely on a partial specification of the CPTs. Despite the critical role of sensitivity analysis \textit{during} the knowledge engineering process of constructing and eliciting a BN \citep{coupe2000sensitivity,Laskey2000}, methods based on a partial model specification are scarce \citep[see e.g.][for two exceptions]{Albrecht2014,boneh2006matilda}.

An implementation of the developed routines is freely available via the \texttt{bnmonitor} R package \citep{leonelli2023sensitivity}. Proofs of the main results are collated in the appendix.

\section{Bayesian networks and sensitivity analysis}

\subsection{Bayesian networks}

Let $G=([n],E)$ be a DAG with vertex set $[n]=\{1,\dots,n\}$ and edge set $E$. Let $\bm{X}=(X_i)_{i\in[n]}$ be categorical random variables with joint probability mass function (pmf) $p$ and sample space $\mathbb{X}=\times_{i\in[n]}\mathbb{X}_i$. For $A\subset [n]$, we let $\bm{X}_A=(X_i)_{i\in A}$ and $\bm{x}_A=(x_i)_{i\in A}$ where $\bm{x}_A\in\mathbb{X}_A=\times_{i\in A}\mathbb{X}_i$. We say that $p$ is Markov to $G$ if, for $\bm{x}\in\mathbb{X}$, 
\begin{equation}
    \label{eq:factorization}
p(\bm{x})=\prod_{i\in[n]}p(x_i \mid \bm{x}_{\Pi_i}),
\end{equation}
where $\Pi_k$ is the parent set of $k$ in $G$ and $p(x_k \vert \bm{x}_{\Pi_k})$ is a shorthand for $p(X_k=x_k \vert \bm{X}_{\Pi_k} = \bm{x}_{\Pi_k})$.  The factorization in Equation (\ref{eq:factorization}) is equivalent to a set of conditional independences as formalized by the \textit{local Markov property}. In a BN with DAG $G$ it holds that, for every $i\in [n]$,
\[
X_i\independent X_{ND_i} | X_{\Pi_i},
\]
where $ND_i$ is the set of non-descendants of $i$ in $G$. For categorical variables, each term of the factorization in Equation (\ref{eq:factorization}) is taken from the CPT of the relevant vertex. The CPT is a \textit{stochastic matrix} (i.e. non-negative rows summing to one) where each row correspond to a combination of the parents.

The DAG associated to a BN provides an intuitive overview of the relationships between variables of interest. However, it also provides a framework to assess if any generic conditional independence holds for a specific subset of the variables via the so-called \emph{d-separation} criterion \citep[see e.g.][]{koller2009probabilistic}. Efficient propagation of probabilities and evidence to compute any (conditional) probability involving a specific subset of the variables can be carried out exploiting the underlying DAG structure, which we discuss in Section \ref{sec:computing} below. Next, we introduce two examples to illustrate BNs.

\subsection{Two examples}

\begin{figure}
     \centering
     \begin{subfigure}[b]{0.48\textwidth}
         \centering
         \includegraphics[width=\textwidth]{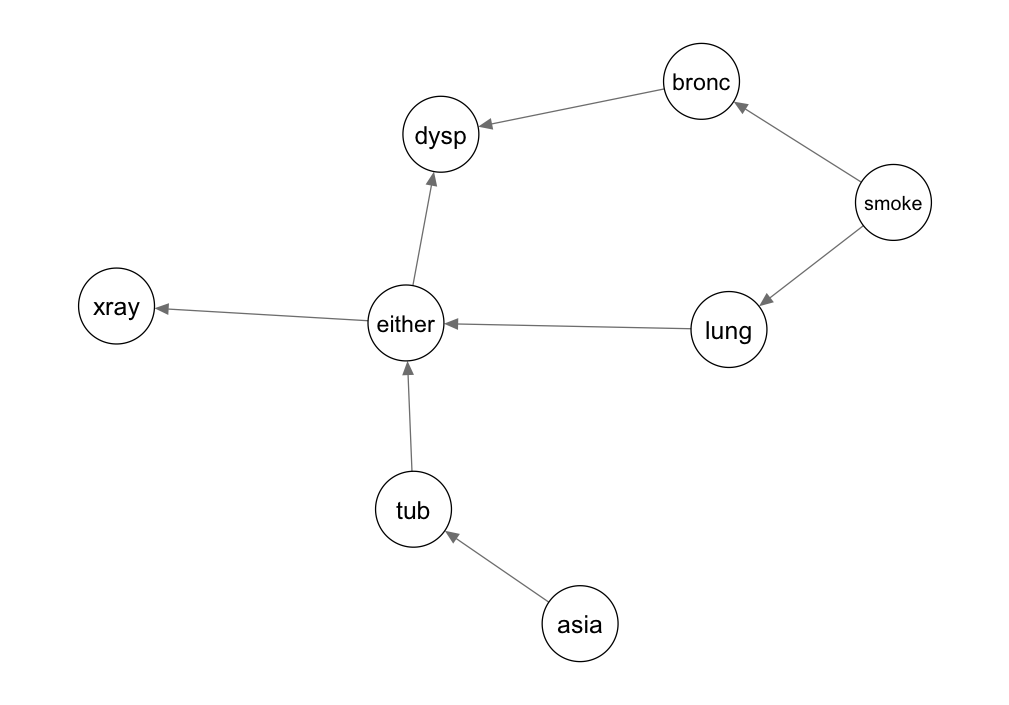}
         \caption{The \texttt{asia} BN}
         \label{fig:asia}
     \end{subfigure}
     \hfill
     \begin{subfigure}[b]{0.48\textwidth}
         \centering
         \includegraphics[width=\textwidth]{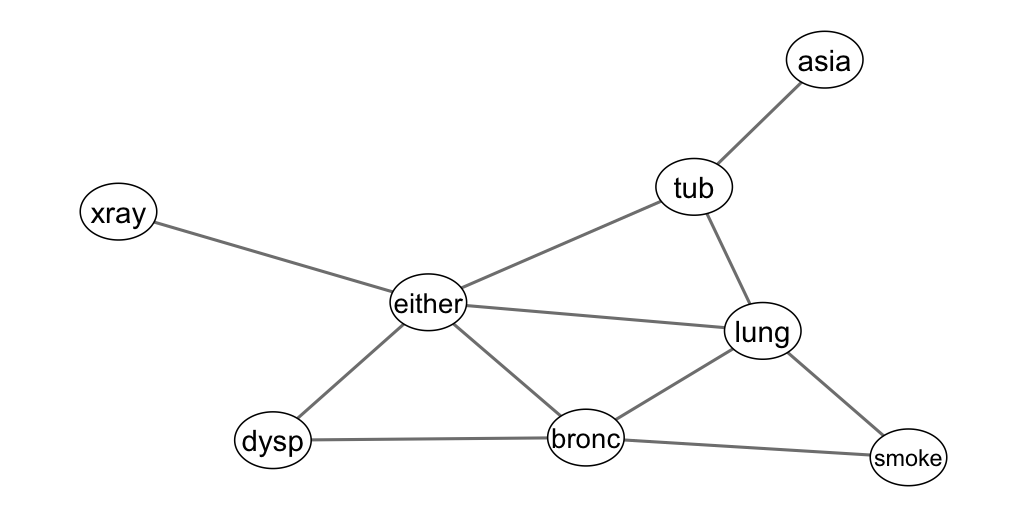}
         \caption{The triangulated \texttt{asia} BN.}
         \label{fig:asia1}
     \end{subfigure}
        \caption{Illustration of the triangulation of a BN.}
        \label{fig:three graphs}
\end{figure}

We first consider the well-known \texttt{asia} BN reported in Figure \ref{fig:asia}, representing the possible causes of shortness of breath, also called dyspnea (\texttt{dysp}). A visit to Asia (\texttt{asia}) may cause tuberculosis (\texttt{tub}), while smoking may cause bronchitis (\texttt{bronc}) and lung cancer (\texttt{lung}). The vertex \texttt{either} is an or gate, which is equal to yes if either \texttt{lung} or \texttt{tub} is equal to yes. The output of a chest x-ray (\texttt{xray}) only depends on the presence of lung cancer or tuberculosis, while dyspnea depends on all three diseases (bronchitis, lung cancer, tuberculosis). The factorization of the pmf for this BN is 
\begin{multline*}
p(\texttt{xray}|\texttt{either})p(\texttt{dysp}|\texttt{bronc},\texttt{either})p(\texttt{either}|\texttt{lung},\texttt{tub})\\p(\texttt{bronc}|\texttt{smoke})p(\texttt{lung}|\texttt{smoke})p(\texttt{tub}|\texttt{asia})p(\texttt{smoke})p(\texttt{asia}).
\end{multline*}

\begin{table}[]
    \centering
    \scalebox{0.6}{
    \begin{tabular}{|C{3.5em}|C{3.5em}|C{5.5em}|C{5.5em}|}
    \hline
    \multirow{2}{*}{\texttt{bronc}} & \multirow{2}{*}{\texttt{either}} & \multicolumn{2}{c|}{$P(\texttt{dysp}|\texttt{either},\texttt{bronc})$} \\
    \cline{3-4}
         & & yes & no \\
         \hline
    yes & yes &  0.9 & 0.1\\
    \hline
    yes & no &  0.8 & 0.2\\
    \hline 
    no & yes  & 0.7 & 0.3\\
    \hline 
    no & no & 0.1 & 0.9 \\
    \hline
    \end{tabular}
    \hspace{1cm}
        \begin{tabular}{|C{3.5em}|C{3.5em}|C{5.5em}|C{5.5em}|}
    \hline
    \multirow{2}{*}{\texttt{lung}} & \multirow{2}{*}{\texttt{tub}} & \multicolumn{2}{c|}{$P(\texttt{either}|\texttt{tub},\texttt{lung})$} \\
    \cline{3-4}
         & & yes & no \\
         \hline
    yes & yes &  1.0 & 0.0\\
    \hline
    yes & no &  1.0 & 0.0\\
    \hline 
    no & yes  & 1.0 & 0.0\\
    \hline 
    no & no & 0.0 & 1.0 \\
    \hline
    \end{tabular}
    }
    \caption{CPTs from the \texttt{asia} BN associated to the vertices \texttt{dysp} (left) and \texttt{either} (right).}
    \label{tab:CPTs}
\end{table}

Table \ref{tab:CPTs} reports two CPTs from the \texttt{asia} BN. Each row represents a combination of the parent variables and the associated numeric entries are non-negative and sum to one. Therefore, they are stochastic matrices. The variable \texttt{either} is deterministically defined by its parents.

The second example is a BN introduced in \citet{varando2024staged}, learned over data from the 2012 Italian enterprise innovation survey collected by ISTAT \citep{ISTAT2015}, the Italian national statistical institute.  The survey reports information about medium-sized Italian companies and their involvement with innovation in  2010-2012. The analysis aims to assess which factors related to innovation are connected with changes in the company revenue. The considered variables are reported in Table \ref{tab:istat} and details about data pre-processing can be found in \citet{varando2024staged}. Figure \ref{fig:istat} reports the learned BN using the \texttt{tabu} algorithm of the \texttt{bnlearn} R package \citep{Scutari2010}, comprising of 15 vertices and 38 edges. The output variable (\texttt{GROWTH}) is directly affected by the number of employees of the company (\texttt{EMP12}) and whether or not the company carried out other innovation activities in 2010-2012 (\texttt{INPD}). Its CPT is reported in Table \ref{table:growth}.

\begin{table}[]
    \centering
    \scalebox{0.65}{
    \begin{tabular}{|c|c|c|}
    \toprule
    Name & Explanation & Levels \\
    \midrule
        \texttt{GP} (P)& Belongs to an industrial group & Yes/No  \\
        \texttt{LARMAR} (L) & Main market & Regional/National/International\\
         \texttt{INPDGD} (D) & Product innovation 2010-2012 & Yes/No \\
         \texttt{INPDSV} (V)& Service innovation 2010-2012 & Yes/No\\
         \texttt{INPD} (N)& Other innovations 2010-2012 & Yes/No \\
         \texttt{INABA} (A)& Abandoned innovation 2008-2010 & Yes/No\\
         \texttt{INONG} (I)& Ongoing innovation from 2008-2010 & Yes/No\\
         \texttt{CO} (C)& Cooperation agreements for innovation & Yes/No\\
         \texttt{ORG} (O)& New organization practices & Yes/No\\
         \texttt{MKT} (M)& New marketing practices & Yes/No\\
         \texttt{PUB} (B)& Contracts with public institutions & Yes/No\\
         \texttt{EMP12} (2)& Number of employees in 2012 & 10-49/50-249/$>$250\\
         \texttt{EMPUD} (E)& Employees with degree & 0\%/1-10\%/$>$10\%\\
         \texttt{RR} (R)&  Research \& development & Yes/No\\
         \texttt{GROWTH} (G)& Increased revenue 2012/2010 & Yes/No\\
         \bottomrule
    \end{tabular}
    }
\caption{Variables from the 2012 ISTAT enterprise innovation survey.}
    \label{tab:istat}
\end{table}

\begin{figure}
    \centering
    \includegraphics[scale=0.25]{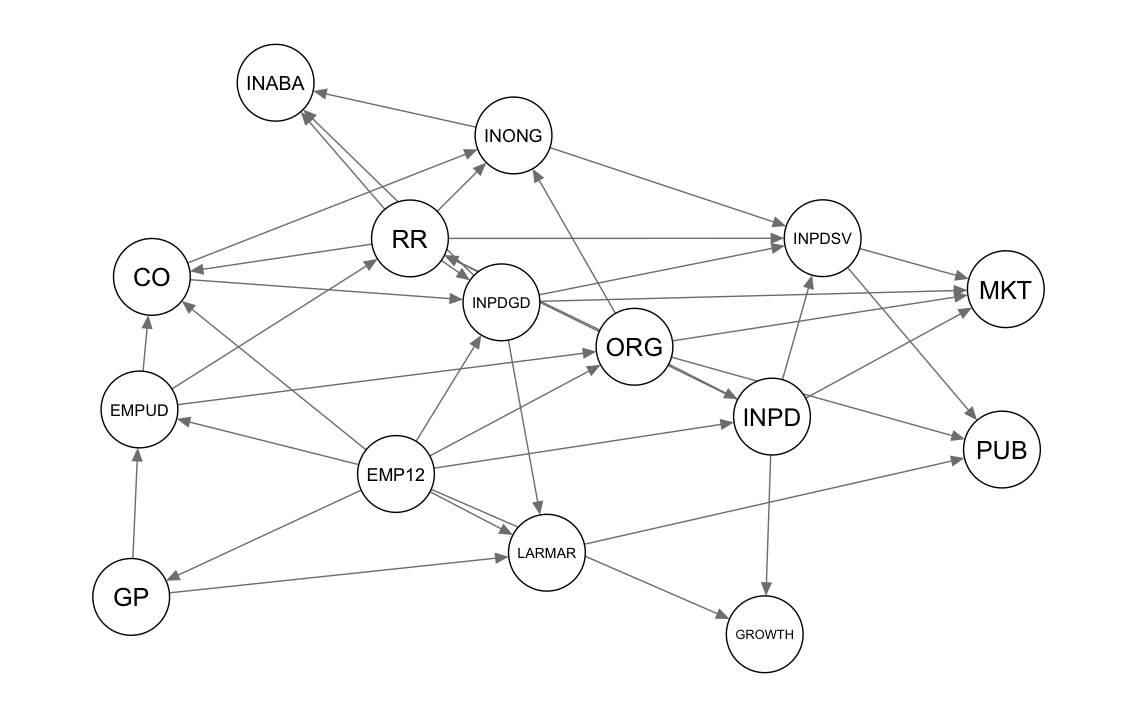}
    \caption{The \texttt{istat} BN learned over the 2012 ISTAT enterprise innovation data.}
    \label{fig:istat}
\end{figure}

\begin{table}[]
    \centering
    \scalebox{0.6}{
    \begin{tabular}{|C{3.5em}|C{3.5em}|C{5.5em}|C{5.5em}|}
    \hline
    \multirow{2}{*}{\texttt{EMP12}} & \multirow{2}{*}{\texttt{INPD}} & \multicolumn{2}{c|}{$P(\texttt{GROWTH}|\texttt{INPD},\texttt{EMP12})$} \\
    \cline{3-4}
         & & yes & no \\
         \hline
    10-49 & yes &  0.563 & 0.437\\
    \hline
    10-49 & no &  0.469 & 0.531\\
    \hline 
    50-249 & yes  & 0.608 & 0.392\\
    \hline 
    50-249 & no & 0.557 & 0.443 \\
    \hline
        $>250$ & yes  & 0.636 & 0.364\\
    \hline 
     $>250$ & no & 0.590 & 0.410 \\
    \hline
    \end{tabular}
    }
    \caption{CPT associated to the vertex \texttt{GROWTH} in the \texttt{istat} BN in Figure \ref{fig:istat}.}
    \label{table:growth}
\end{table}

\subsection{Computing probabilities in BNs}
\label{sec:computing}

Probabilistic inference in BNs is known to be NP-hard \citep{cooper1990computational}. However, algorithms that use the underlying DAG to speed up computations have been defined. One of the most famous algorithms transforms the original DAG into a  \textit{junction tree} \citep[see e.g.][]{koller2009probabilistic}. We will show in Section \ref{sec:error} that the junction tree can also be used for sensitivity investigations \citep[also used in][]{kjaerulff2000making}.

The junction tree algorithm first transforms  the original DAG into an undirected graph, by first applying \textit{moralization} (the addition of an edge between any two parents of the same vertex not joined by an edge), then dropping the directionality of the edges and lastly triangulating the graph (adding undirected edges until the resulting graph is such that every cycle of length strictly greater than 3 possesses a chord, that is, an edge joining two nonconsecutive vertices of the cycle). The result of this process over the \texttt{asia} BN is shown in Figure \ref{fig:asia1}.

The \textit{cliques} of the triangulated graph, its maximal fully connected subgraphs, $C_1,\dots,C_m$, can be totally ordered starting from any clique including a root of the original graph. Let $S_i=C_i\cap\cup_{j=1}^{i-1}C_j$ be the \textit{separator} of $C_i$ from the preceding cliques. The cliques can always be ordered to respect the \textit{running intersection property}, meaning that there is at least one $j<i$ such that $S_i\subset C_j$ for any $i\in[m]\setminus\{1\}$. This implies that the result of intersecting a clique with all previous cliques is contained within one or more earlier cliques. Any clique ordering and choice of separator containment can be depicted by a so-called \textit{junction tree}: an undirected tree graph with vertices $C_1,\dots,C_m$ and an undirected edge between $C_i$ and $C_j$ if $C_i$ is the chosen clique respecting $S_j\subset C_i$. The factorization in Equation (\ref{eq:factorization}) can be equivalently written in terms of the cliques as
\[
p(\bm{x})=\frac{\prod_{i\in[m]}p(\bm{x}_{C_i})}{\prod_{i\in[m]\setminus \{1\}}p(\bm{x}_{S_i})}= \prod_{i\in[m]\setminus\{1\}}p(\bm{x}_{C_i}|\bm{x}_{S_i})p(\bm{x}_{C_1}).
\]

For the \texttt{asia} BN, a possible ordering of its cliques is $C_1=\{\texttt{asia},\texttt{tub}\}$, $C_2=\{\texttt{tub},\texttt{lung},\texttt{either}\}$, $C_3=\{\texttt{either},\texttt{xray}\}$, $C_4=\{\texttt{bronc},\texttt{lung},\texttt{either}\}$, $C_5=\{\texttt{bronc},\texttt{either},\texttt{dysp}\}$, and $C_6=\{\texttt{smoke},\texttt{bronc},\texttt{lung}\}$, with separators $S_2=\{\texttt{tub}\}$, $S_3=\{\texttt{either}\}$, $S_4=\{\texttt{lung},\texttt{either}\}$, $S_5=\{\texttt{bronc},\texttt{either}\}$, and $S_6=\{\texttt{bronc},\texttt{lung}\}$. The resulting junction tree is reported in Figure \ref{fig:junction}, where, as customary, we labeled the edges with the separators.

\begin{figure}
    \centering
    \includegraphics[scale=0.2]{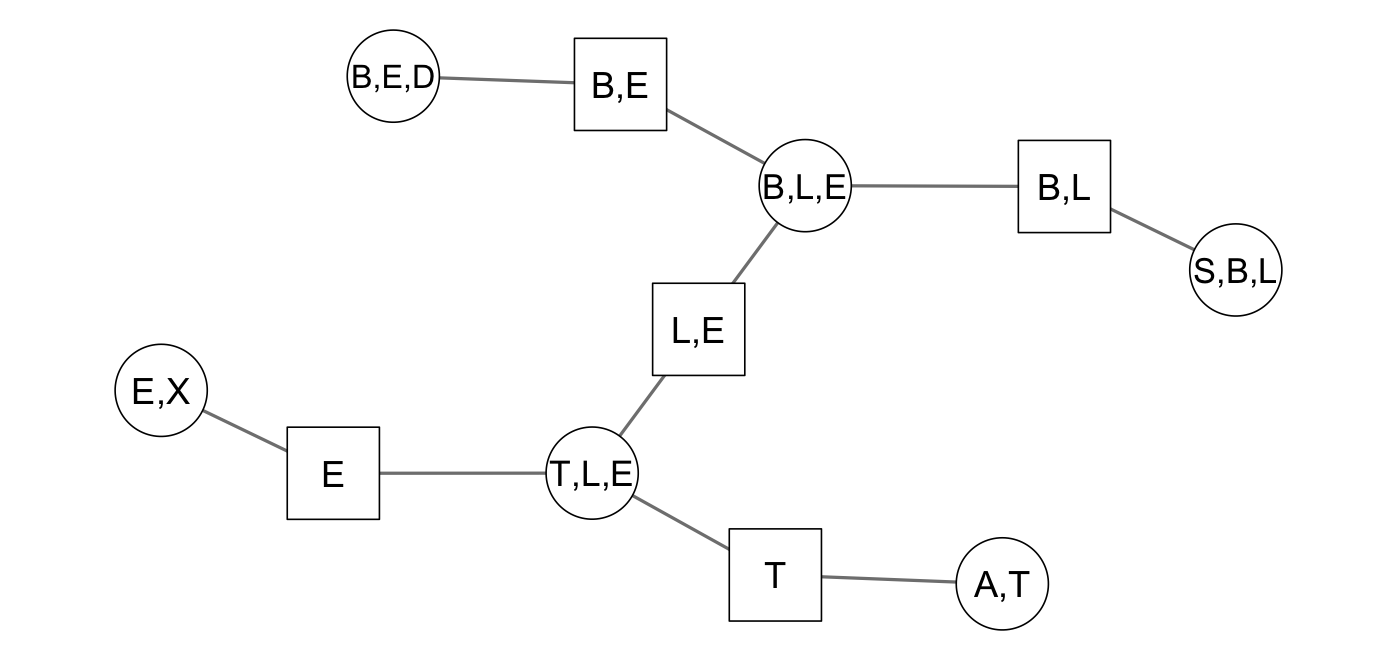}
    \caption{The junction tree representing the \texttt{asia} BN. Initial of variable names are used. Cliques are circled, while separators are squared. }
    \label{fig:junction}
\end{figure}

A straightforward consequence of the junction tree structure is that any two cliques, $C_i$ and $C_j$ say, are connected by a simple path. Assume that $C_j$ includes descendants of $C_i$ in the original DAG. This implies a sequence of separators in the unique path between $C_i$ and $C_j$, say $S_{i+1}$,\dots $S_j$. Letting $S_k^*=S_k\setminus \cup_{l=k+1}^{j}S_l $, for $k=i+1,\dots,j$, we have that
\begin{align}
p(\bm{x}_{C_i},\bm{x}_{C_j})&=\sum_{\bm{x}_{T_{ij}}\in\mathbb{X}_{T_{ij}}} p(\bm{x}_{C_i},\bm{x}_{C_j},\bm{x}_{T_{ij}})\nonumber\\
&=\sum_{\bm{x}_{T_{ij}}\in\mathbb{X}_{T_{ij}}} p(\bm{x}_{C_j}|\bm{x}_{S^*_j})p(\bm{x}_{S^*_j}|\bm{x}_{S^*_{j-1}})\cdots p(\bm{x}_{S^*_{i+2}}|\bm{x}_{S^*_{i+1}})p(\bm{x}_{C_i}),\label{eq:markov}
\end{align}
where $T_{ij}=\cup_{k=i+3}^{j-1}S^*_k\setminus \left\{C_j\cup C_i\right\}$. Equation (\ref{eq:markov}) can be thought of as the pmf of a ``donating" clique $C_i$ and a ``target" clique $C_j$, expressed in terms of a sequence of transitions in a non-homogeneous Markov chain. It means that standard results from non-homogeneous Markov chain theory can be used to measure the extent of the diminishing effect of information as it passes along the simple path from $C_i$ to $C_j$. In particular, it is well-known that variation distance in an ergodic, acyclic Markov chain contracts as information is propagated through the system \citep[e.g.][]{roberts2004general}. This observation will be critical for the developments of sensitivity bounds we introduce later on in this paper.

\subsection{Sensitivity analysis in BNs}

Sensitivity methods for BNs have been widely studied \citep[see][for the most comprehensive yet only partial review]{Rohmer2020}. Here, we provide an overview of BN models' most traditional sensitivity investigations. Notice that most methods require a complete specification of the whole model: its DAG and all entries of its CPTs.

The most widely used sensitivity analysis studies the effect of perturbations of the CPT entries on outputs of interest. Sensitivity functions mathematically model the relationship between inputs and outputs \citep{castillo1997sensitivity,Leonelli2017,van2007sensitivity}. Because of the computational complexity of deriving sensitivity functions for multiple parameter variations \citep{chan2004sensitivity,kwisthout2008computational}, in practice, most often only perturbations of one CPT entry at a time are performed, although recent methodological advances matching BNs to more flexible structures have made more complex investigations possible \citep{ballester2022you,salmani2023automatically}.

Another type of sensitivity investigation studies the overall effect of a node on an output. This is often quantified by the mutual information between the associated variables \citep{kjaerulff2008bayesian}. \citet{Albrecht2014} approached this problem by only considering the DAG of the BN, thus without requiring the CPTs to be defined, and introduced the \textit{distance weighted influence} between two variables $X_j$ and $X_i$ of a DAG. Let $S_{ji}$ be the set of active, simple trails from $j$ to $i$ \citep[see e.g.][]{koller2009probabilistic} and $w\in(0,1]$. The distance weighted influence of $X_j$ on $X_i$ is 
\begin{equation}
\label{eq:dwi}
DWI(X_j,X_i,w) = \sum_{s\in S_{ji}}w^{|s|},
\end{equation}
where $|s|$ is the length of the trail $s$. So $DWI$ measures how connected two vertices of a DAG are, where longer trails have a smaller contribution to the influence. If $w=1$, DWI simply counts the number of active, simple trails. Another way to measure the influence of a node on an output is to quantify how valuable it would be to observe the associated variable, the so-called sensitivity to evidence approach \citep{ballester2022computing,gomez2014sensitivity}.

Another class of sensitivity methods assesses the relevance of the edges of a BN. One standard way to do this for data-learned BNs is to use a non-parametric bootstrap approach and learn a BN for each replication: edges that do not appear frequently are deemed to have less strength \citep{scutari2013identifying}. Another possibility is to use sensitivity functions to quantify the effect of an edge removal \citep{Renooij2010}. Given the complexity of probabilistic inference, deleting edges having a small impact on the inferences made by the model \citep{choi2005bayesian,choi2006edge} is often desirable.


In Section \ref{sec:sens} we introduce new methods for the node relevance and edge strength problems mentioned above, but we also address investigations that have received less attention in the literature.

\section{Total variation and the diameter}

We next introduce our new metric, the diameter, based on the total variation distance between pmfs. 

\begin{definition}
Let $p$ and $p'$ be two pmfs over the same sample space $\mathbb{X}$. The \emph{total variation distance} between $p$ and $p'$ is 
\[
d_V(p,p')=\frac{1}{2}\sum_{x\in\mathbb{X}}|p(x)-p'(x)|.
\]
\end{definition}
As quantified through the absolute differences,  deviation in variation corresponds much more closely to the types of error we would envisage experiencing within either an elicitation exercise or through misestimation. 

We next define a measure of the overall total variation distance between the rows of a CPT. As already noticed, CPTs are stochastic matrices where each row is a pmf.

\begin{definition}
The (upper) \emph{diameter} of a $n\times m$ stochastic matrix $P$ with rows $p_1,\dots,p_n$, denoted as $d^+(P)$, is
\[
d^+(P)=\max_{i,j\in[ n]}d_V(p_i,p_j).
\]
The lower diameter  is
\[
d^-(P)=\min_{i,j\in[n]}d_V(p_i,p_j).
\]
\end{definition}

Our main focus is on the upper diameter which we henceforth simply refer to as the diameter, while the lower diameter will become relevant in Section \ref{sec:asy} only. The diameter is the largest variation distance between any two rows of a stochastic matrix. The use of the maximum distance is motivated by establishing bounds on the effect of probability perturbations and their propagation throughout the DAG.

We next characterize properties of the diameter and its relationship with conditional independence and marginalization, which will further shed light on its interpretation. We therefore now focus specifically on stochastic matrices representing CPTs. For a CPT we use the notation $P_{\cdot | \cdot}$, where the elements on the rhs of the subscript are the conditioning variables, while those on the lhs of the subscript are those of which the pmf is evaluated. For ease of interpretation, we use $X,Y,Z$ to denote generic categorical random variables (although all the results below can be written for categorical random vectors). 

\begin{proposition}
\label{prop:1}
   For two categorical random variables $X$ and $Y$ it holds
    \[
    d^+(P_{Y|X}) = 0 \Leftrightarrow Y\independent X.
    \]
\end{proposition}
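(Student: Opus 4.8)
The plan is to exploit the fact that the diameter is a maximum of non-negative total variation distances, so that $d^+(P_{Y|X})=0$ forces \emph{every} pairwise distance between rows to vanish, and then to translate the resulting statement ``all rows are equal'' into the formal definition of independence. The only analytic input needed is the elementary observation that $d_V$ is a metric on pmfs; in particular $d_V(p,p')=0$ if and only if $p=p'$. This is immediate from the definition, since $\frac{1}{2}\sum_{x}|p(x)-p'(x)|=0$ requires $p(x)=p'(x)$ for every $x\in\mathbb{X}$.

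For the forward implication, I would note that since $d^+(P_{Y|X})=\max_{i,j}d_V(p_i,p_j)$ is a maximum of non-negative quantities, it equals zero exactly when $d_V(p_i,p_j)=0$ for all pairs $i,j$. By the metric property this means all rows coincide, i.e. $p(y\mid x_i)=p(y\mid x_j)$ for every $y$ and every pair of values $x_i,x_j$ of $X$. Hence the conditional pmf $p(y\mid x)$ does not depend on $x$, and marginalizing over $X$ gives $p(y\mid x)=\sum_{x'}p(y\mid x')p(x')=p(y)$. Therefore $p(x,y)=p(y\mid x)p(x)=p(y)p(x)$, which is precisely $Y\independent X$. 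The converse runs in reverse: if $Y\independent X$ then $p(y\mid x)=p(y)$ for every $x$, so every row of $P_{Y|X}$ equals the marginal pmf $p(Y)$, all pairwise distances are zero, and the maximum vanishes.

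I do not expect a genuine technical obstacle here, as the statement is a direct unwinding of the definitions of total variation distance, of the diameter as a maximum, and of independence. The one point requiring care is the implicit positivity assumption: each row $p(Y\mid X=x)$ is only well defined as a distribution when $p(x)>0$, so I would either assume the joint pmf is strictly positive (as is standard in this setting) or restrict the rows of $P_{Y|X}$ to values of $X$ in its support. With that bookkeeping in place, the equivalence follows immediately.
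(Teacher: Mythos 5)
Your proof is correct and follows essentially the same route as the paper: both directions are a direct unwinding of the definitions, showing that a zero diameter forces all rows of $P_{Y|X}$ to coincide and hence equal the marginal $p(Y)$, and conversely. Your added details (the marginalization step identifying the common row with $p(Y)$, and the remark about restricting to $x$ with $p(x)>0$) are sound refinements of the paper's terser argument, not a different approach.
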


The closer the diameter is to zero, the less dependent two variables are. To see this, it is easy to check that whenever some non-trivial function of $Y$ can be written as a deterministic function of $X$ then $d^+(P_{Y|X})=1$, its maximum value. So when changing the levels of $X$ has a minimum impact on the pmf of $Y$ then $d^+(P_{Y|X})\approx0$. Note that unless $P_{Y|X}$ is symmetric, $d^+(P_{Y|X})\neq d^+(P_{X|Y})$, in fact the difference between these can be arbitrarily close to 1 \citep{Wright2018}.

\begin{table}[]
    \centering
    \scalebox{0.65}{
    \begin{tabular}{cccccc}
    \toprule
       \texttt{tub}  & \texttt{lung} & \texttt{bronc} & \texttt{either} & \texttt{xray} & \texttt{dysp} \\
       \midrule
      0.04   & 0.09 & 0.30 & 1.00 & 0.93 & 0.80\\
      \bottomrule
    \end{tabular}}
    \caption{Diameter of the non-root nodes of the \texttt{asia} BN.}
    \label{tab:asia_diameter}
\end{table}

\begin{table}[]
    \centering
    \scalebox{0.65}{
    \begin{tabular}{cccccccccccccc}
    \toprule
       \texttt{GP}  & \texttt{LARMAR} & \texttt{INPDGD} & \texttt{INPDSV} & \texttt{INPD} & \texttt{INABA} & \texttt{INONG} & \texttt{CO} & \texttt{ORG} & \texttt{MKT} & \texttt{PUB} & \texttt{RR}&\texttt{EMPUD} & \texttt{GROWTH}\\
       \midrule
      0.666   & 0.611 & 0.857 & 0.859 & 0.648 & 0.500 & 0.761 & 0.484 & 0.546 & 0.697 & 0.344 & 0.941 & 0.521 & 0.167\\
      \bottomrule
    \end{tabular}}
    \caption{Diameter of the non-root nodes of the \texttt{istat} BN.}
    \label{tab:istat_diameter}
\end{table}

\begin{figure}
    \centering
    \includegraphics[scale=0.25]{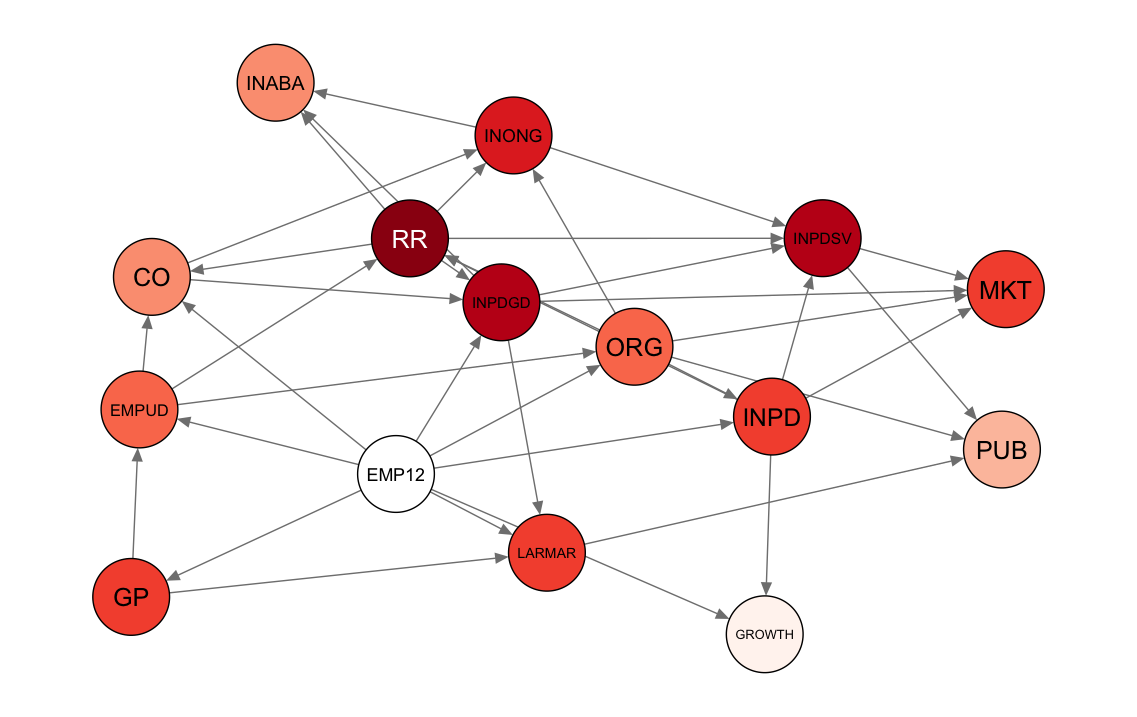}
    \caption{Heatmap of the diameters of the \texttt{istat} BN. Darker colors represent higher diameter values.}
    \label{fig:istat_diameter}
\end{figure}

Tables \ref{tab:asia_diameter} and \ref{tab:istat_diameter} report the diameters of the non-root nodes of the \texttt{asia} and \texttt{istat} BNs. For the \texttt{asia} BN it can be noticed highly different values of the diameter, with the variables \texttt{tub} and \texttt{lung} almost independent of their parents, while the diameter of \texttt{either} is one since it is a deterministic function of its parents (or gate). For the \texttt{istat} BN,  the lowest diameter is the one of the target variable \texttt{GROWTH}, demonstrating that a company's revenue weakly depends on its parents.  The CPT of \texttt{GROWTH} in Table \ref{table:growth} includes very similar rows.  Figure \ref{fig:istat_diameter} gives a visualization of the \texttt{istat} BN diameters, where nodes with a darker color have CPTs whose diameter is larger.

Next we formalize how the diameter behaves under marginalization.

\begin{proposition}
\label{prop:2}
    For three categorical random variables $X,Y,Z$ it holds
    \[
    d^+(P_{Y|X}) \leq d^+(P_{Y|XZ}),
    \]
and equality holds if and only if $Y\independent Z | X$.
\end{proposition}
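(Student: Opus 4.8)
The plan is to base everything on two facts: first, that by the law of total probability each row of $P_{Y|X}$ is a mixture over $Z$ of rows of $P_{Y|XZ}$, namely $p(y\mid x_i)=\sum_k p(z_k\mid x_i)\,p(y\mid x_i,z_k)$; and second, the event formulation of total variation, $d_V(p,p')=\max_{A\subseteq\mathbb{X}}(p(A)-p'(A))$. Together these reduce the claim to a statement about how averaging over $Z$ affects the largest achievable gap in event-probabilities between two rows.

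For the inequality, I would fix the pair $(i,j)$ attaining $d^+(P_{Y|X})$ and the event $A$ attaining $d_V(p(\cdot\mid x_i),p(\cdot\mid x_j))$. Substituting the mixture representation into $p(A\mid x_i)-p(A\mid x_j)$ and using $\sum_k p(z_k\mid x_i)=\sum_l p(z_l\mid x_j)=1$ rewrites this difference as
\[
\sum_{k,l}p(z_k\mid x_i)\,p(z_l\mid x_j)\bigl(p(A\mid x_i,z_k)-p(A\mid x_j,z_l)\bigr),
\]
a convex combination of the quantities $p(A\mid x_i,z_k)-p(A\mid x_j,z_l)$. Each such quantity is bounded by $d_V(p(\cdot\mid x_i,z_k),p(\cdot\mid x_j,z_l))\le d^+(P_{Y|XZ})$, so the average is too, giving $d^+(P_{Y|X})=p(A\mid x_i)-p(A\mid x_j)\le d^+(P_{Y|XZ})$.

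The direction $Y\independent Z\mid X\Rightarrow$ equality is immediate: under conditional independence $p(\cdot\mid x_i,z_k)=p(\cdot\mid x_i)$ does not depend on $k$, so the rows of $P_{Y|XZ}$ are exactly the rows of $P_{Y|X}$ (with repetitions), and the two diameters agree.

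The converse, equality $\Rightarrow Y\independent Z\mid X$, is the step I expect to be the main obstacle. Here I would trace back the inequality: if the convex combination above equals $d^+(P_{Y|XZ})$, then every summand carrying positive weight must itself equal $d^+(P_{Y|XZ})$, forcing $p(A\mid x_i,z_k)-p(A\mid x_j,z_l)=d^+(P_{Y|XZ})$ for all $z_k,z_l$ of positive conditional mass; this pins $p(A\mid x_i,z_k)$ to a value constant in $k$ (and likewise for $x_j$) and makes $A$ the separating event for each of these row-pairs. The difficulty is that these consequences concern only the single event $A$ and only the rows entering the extremal pair $(x_i,x_j)$, whereas the conclusion $p(\cdot\mid x,z)=p(\cdot\mid x)$ must hold for \emph{every} event and \emph{every} conditioning value. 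Upgrading the tightness from the extremal rows to all rows is the crux, and I anticipate it requires propagating the equality across all events and conditioning levels — most cleanly under a positivity assumption on the conditional distributions, with additional care needed when $X$ has more than two levels, since then the extremal pair need not witness every dependence on $Z$.
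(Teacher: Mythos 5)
Your proof of the inequality and of the $(\Leftarrow)$ direction is correct, and the inequality argument is essentially the paper's own: the paper also writes each row of $P_{Y|X}$ as a mixture over $Z$ of rows of $P_{Y|XZ}$ and applies its convexity lemma (Lemma \ref{lemma:3}) twice, which yields exactly your bound
\[
p(A\mid x_i)-p(A\mid x_j)=\sum_{k,l}p(z_k\mid x_i)\,p(z_l\mid x_j)\bigl(p(A\mid x_i,z_k)-p(A\mid x_j,z_l)\bigr)\leq d^+(P_{Y|XZ}),
\]
merely phrased through the maximal-event characterization of total variation rather than the $\ell_1$ definition.

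The genuine gap is the $(\Rightarrow)$ direction, which you leave unfinished --- and your diagnosis of why it resists proof is not a mere technicality: the implication as stated is false once $X$ has more than two levels. Take $X$ ternary, $Y$ and $Z$ binary, $p(z\mid x)=1/2$ for all $x,z$, and set $p(Y=1\mid x_1,z)=1$ and $p(Y=1\mid x_2,z)=0$ for both $z$, while $p(Y=1\mid x_3,z_1)=0.3$ and $p(Y=1\mid x_3,z_2)=0.7$. Then $d^+(P_{Y|XZ})=1$, and the marginal CPT has rows with $p(Y=1\mid x)$ equal to $1$, $0$, $0.5$, so $d^+(P_{Y|X})=1$ as well; equality holds, yet $Y$ is not conditionally independent of $Z$ given $X$ because the two $x_3$-rows differ. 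Tightness in the chain of inequalities constrains only the rows entering an extremal pair $(x_i,x_j)$, leaving rows for other levels of $X$ completely free --- precisely the obstruction you anticipated. Positivity with binary $X$ does not rescue the claim either when $Y$ has three levels: with uniform $p(z\mid x)$, rows $(0.9,0.1,0)$ and $(0.9,0,0.1)$ for $x_1$, and row $(0.1,0.45,0.45)$ for both $z$ under $x_2$, both diameters equal $0.8$ while the $x_1$-rows differ; equality of mass on the maximizing event does not force equality of the distributions. For comparison, the paper itself does not close this gap: its proof disposes of the equality case with the single remark that it ``follows a similar reasoning to that of Proposition \ref{prop:1}'', which at best yields independence in the contexts attaining the diameter (and only where $p(\cdot\mid x)$ has full support over the levels of $Z$). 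So your hesitation is well founded; a correct version of the statement needs extra hypotheses or a weakened conclusion, and any complete proof attempt along the lines you sketch would have run into the counterexamples above.
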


The intuition behind this result is that each row of $P_{Y|X}$ is a weighted average of some rows of $P_{Y|XZ}$ and therefore the rows of $P_{Y|X}$ are necessarily ``closer" to each other. This implies that the diameter of a CPT based on a subset of the variables in the parent set can always be bounded by the diameter of the CPT based on the complete set of parents.

The last result writes the diameter of a random vector as the sum of the diameter of individual CPTs.
\begin{proposition}
\label{prop:3}
For three categorical random variables $X,Y,Z$ it holds
    \[
    d^+(P_{YZ|X})\leq \min \{d^+(P_{Y|XZ})+d^+(P_{Z|X}),1\}.
    \]
\end{proposition}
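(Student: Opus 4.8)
The plan is to work directly from the definition of total variation distance, using the chain rule $p(y,z \mid x) = p(y \mid z,x)\, p(z \mid x)$ to decouple the joint into the two conditional factors that appear on the right-hand side. The $1$ in the minimum is immediate, since total variation distance never exceeds $1$ and hence neither does any diameter; so the entire task reduces to establishing $d^+(P_{YZ|X}) \le d^+(P_{Y|XZ}) + d^+(P_{Z|X})$.

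Fix two values $x,x'$ of $X$, indexing two rows of $P_{YZ|X}$. The first step is to rewrite the difference of the two joint pmfs by a single telescoping insertion:
\[
p(y \mid z,x)\,p(z \mid x) - p(y \mid z,x')\,p(z \mid x') = p(y \mid z,x)\bigl[p(z \mid x) - p(z \mid x')\bigr] + p(z \mid x')\bigl[p(y \mid z,x) - p(y \mid z,x')\bigr].
\]
Applying the triangle inequality and summing $\sum_{y,z}|p(y,z \mid x) - p(y,z \mid x')|$ then splits into two sums. The key point is that the marginalizing factors were chosen so that each sum collapses: in the first sum, $\sum_y p(y \mid z,x)=1$, leaving $\sum_z |p(z \mid x) - p(z \mid x')| = 2\,d_V(p(\cdot \mid x), p(\cdot \mid x')) \le 2\,d^+(P_{Z|X})$; in the second sum, summing over $y$ gives $\sum_z p(z \mid x')\cdot 2\,d_V(p(\cdot \mid z,x), p(\cdot \mid z,x'))$, and since each inner distance is between two rows of $P_{Y|XZ}$ it is bounded by $d^+(P_{Y|XZ})$, so the convex combination over $z$ (with weights $p(z \mid x')$ summing to $1$) is bounded by $2\,d^+(P_{Y|XZ})$. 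Dividing by $2$ bounds $d_V$ of the two rows by $d^+(P_{Y|XZ}) + d^+(P_{Z|X})$, and maximizing over $x,x'$ finishes the argument.

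The only real subtlety, and where I would be careful, is the pairing in the telescoping step: the difference $p(z \mid x) - p(z \mid x')$ must be multiplied by $p(y \mid z,x)$ so that the $y$-sum is $1$, while the difference $p(y \mid z,x) - p(y \mid z,x')$ must be multiplied by $p(z \mid x')$ so that the $z$-sum is $1$. A different pairing would leave a residual factor that does not simplify to either diameter. Everything else is a routine application of the triangle inequality together with the elementary fact that a convex combination of quantities each bounded by $d^+(P_{Y|XZ})$ remains bounded by that same constant.
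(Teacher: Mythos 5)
Your proposal is correct and is essentially the paper's own argument: the paper isolates your telescoping-plus-triangle-inequality step as a standalone auxiliary result (Lemma \ref{lemma:0}, stated for joint pmfs, with exactly the same pairing of factors you emphasize) and then obtains the proposition by applying that lemma to the rows $p(\cdot\mid x)$, $p(\cdot\mid x')$ and maximizing over $x,x'$. Your version simply inlines the lemma's proof, replacing the paper's final $\max$ bound with an equivalent convex-combination bound, so there is no substantive difference.
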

These results are useful to construct bounds over the CPTs of vertices of a junction tree, by using the diameters of the original CPTs of the BN, thus not having to compute any new information. We showcase their usefulness in Section \ref{sec:error}. 

In the next section we discuss how the diameter can be used for various sensitivity investigations in BNs which are fully defined (all CPTs have been fully learned or elicited). However, we envisage that the diameter can also be directly elicited from experts given a DAG. The sensitivity methods we develop could then drive the complete elicitation of the BN by focusing on the CPTs which have the biggest impact on the output of interest. A more comprehensive discussion of the use of the diameter within the elicitation of a BN is given in Section \ref{sec:discussion}.

\section{Sensitivity analysis using the diameter}
\label{sec:sens}

\subsection{Edge strength}

\begin{figure}
    \centering
    \includegraphics[scale=0.2]{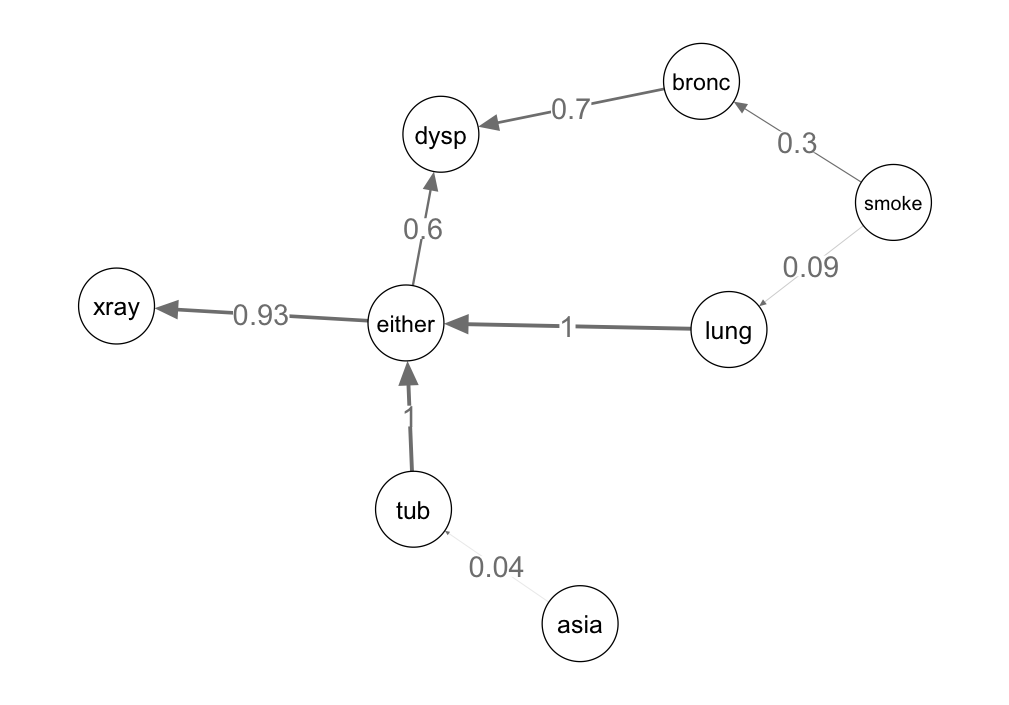}
    \caption{The \texttt{asia} BN with edge strengths as edges' labels and widths.}
    \label{fig:edge_asia}
\end{figure}

The first problem we consider is the quantification of the strength of an edge in a BN. Let $P_i$ be the CPT of $Y_i$ and with $P_{i|\bm{x}}$ we denote the sub-CPT of $P_i$ including only the rows specified by $\bm{x}$. 

\begin{definition}
The strength of edge $(j,i)$ in a BN is defined as
\[
\delta_{ji} = \max_{\bm{x}\in\mathbb{X}_{\Pi_i\setminus j}} d^+(P_{i | \bm{x}}).
\]
\end{definition}
So $\delta_{ji}$ is the largest diameter out of all CPTs for every combination of all parents of $i$ excluding $j$. To illustrate this, consider the CPT of \texttt{GROWTH} in Table \ref{table:growth}. The strength of the edge (\texttt{INPD},\texttt{GROWTH}) is the largest of the diameters of the the three CPTS where \texttt{EMP12} is fixed to its three levels 10-49/50-249/$>250$.

Figures \ref{fig:edge_asia} and \ref{fig:edge_istat} report the edge strengths as edge labels and widths in the \texttt{asia} and \texttt{istat} BN, respectively. They vary from almost zero to exactly one in the case of a functional relationship. Interestingly, the two edges with the lowest strength in the  \texttt{istat} BN are those into the output variable \texttt{GROWTH}, again indicating how the revenue of an enterprise is almost independent of all other factors.

\begin{figure}
    \centering
    \includegraphics[scale=0.25]{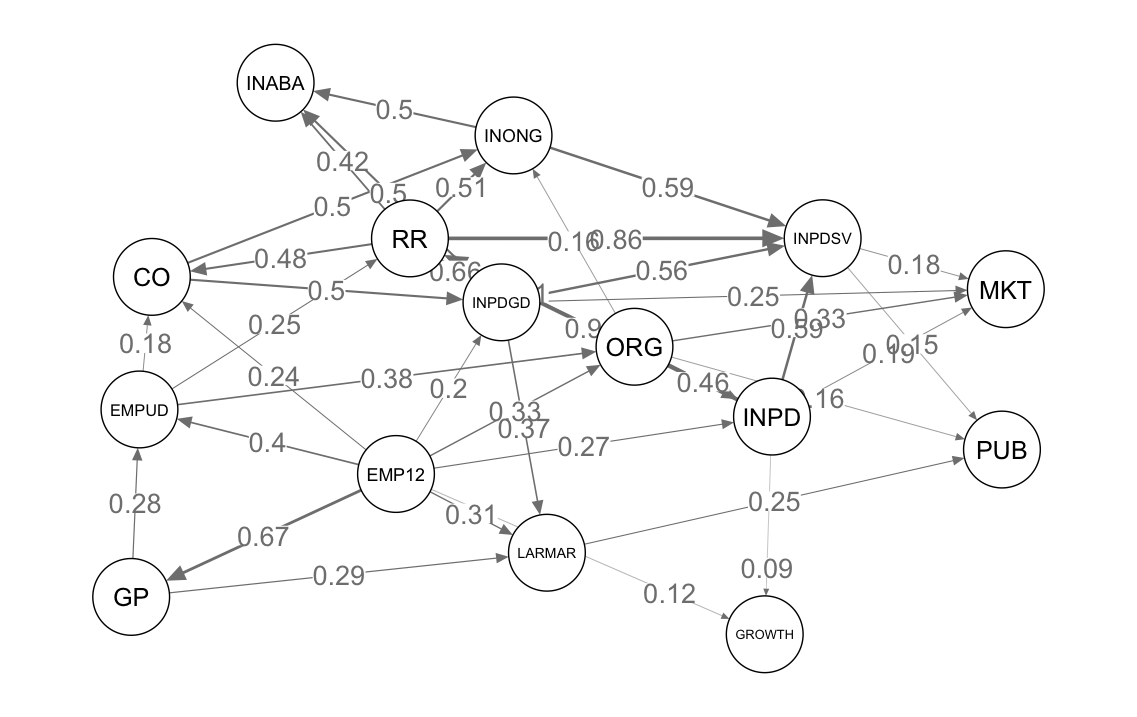}
    \caption{The \texttt{istat} BN with edge strengths as edges' labels and widths.}
    \label{fig:edge_istat}
\end{figure}

Just as the diameter represents marginal independence, edge strength denotes conditional independence.

\begin{proposition}
\label{prop:4}
    \[
    X_i \independent X_j | \bm{X}_{\Pi_i\setminus j} \Leftrightarrow \delta_{ji} = 0.
    \]
\end{proposition}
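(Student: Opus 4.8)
The plan is to reduce the statement to Proposition~\ref{prop:1} applied ``slice by slice'' over the fixed configurations $\bm{x}\in\mathbb{X}_{\Pi_i\setminus j}$, and then to translate the resulting collection of conditional independences into the single statement $X_i\independent X_j\mid\bm{X}_{\Pi_i\setminus j}$.

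First I would unpack the two definitions involved. By definition $\delta_{ji}=\max_{\bm{x}\in\mathbb{X}_{\Pi_i\setminus j}}d^+(P_{i\mid\bm{x}})$, where for each fixed $\bm{x}$ the sub-CPT $P_{i\mid\bm{x}}$ is the stochastic matrix whose rows are the pmfs $p(X_i\mid X_j=x_j,\bm{X}_{\Pi_i\setminus j}=\bm{x})$ indexed by $x_j\in\mathbb{X}_j$. Since $d^+\geq 0$ always and $\delta_{ji}$ is a maximum of finitely many such terms, we immediately obtain the first reduction: $\delta_{ji}=0$ if and only if $d^+(P_{i\mid\bm{x}})=0$ for every $\bm{x}\in\mathbb{X}_{\Pi_i\setminus j}$.

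Next I would treat each $\bm{x}$ as defining a conditional probability model in which $X_j$ plays the role of the single conditioning variable ``$X$'' and $X_i$ the role of ``$Y$'' in Proposition~\ref{prop:1}. In that conditional world $P_{i\mid\bm{x}}$ is precisely the CPT $P_{X_i\mid X_j}$, so Proposition~\ref{prop:1} yields $d^+(P_{i\mid\bm{x}})=0\Leftrightarrow X_i\independent X_j\mid\bm{X}_{\Pi_i\setminus j}=\bm{x}$; that is, the rows of $P_{i\mid\bm{x}}$ coincide exactly when $X_i$ is independent of $X_j$ after conditioning on the specific value $\bm{x}$. Combining this with the reduction above, $\delta_{ji}=0$ holds if and only if $X_i\independent X_j\mid\bm{X}_{\Pi_i\setminus j}=\bm{x}$ for all $\bm{x}$. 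The final step is the standard fact that this pointwise family of conditional independences is equivalent to the aggregate statement $X_i\independent X_j\mid\bm{X}_{\Pi_i\setminus j}$, since $p(x_i\mid x_j,\bm{x})$ being constant in $x_j$ for every $\bm{x}$ is exactly the factorization defining conditional independence.

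The only delicate points, which I would flag explicitly rather than grind through, are bookkeeping ones: checking that each $P_{i\mid\bm{x}}$ is genuinely a stochastic matrix (its rows sum to one as conditional pmfs), so that Proposition~\ref{prop:1} applies verbatim, and handling parent configurations $\bm{x}$ with $p(\bm{x})=0$, on which conditioning is vacuous and which by the usual convention do not affect the conditional independence statement. These are routine, so the real substance of the argument is the two-line passage from Proposition~\ref{prop:1} to the ``for all $\bm{x}$'' version and back.
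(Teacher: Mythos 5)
Your proof is correct and is essentially the argument the paper intends: since the appendix proves only Propositions \ref{prop:1}--\ref{prop:3} and Theorem \ref{theo:1}, Proposition \ref{prop:4} is left as an immediate consequence of Proposition \ref{prop:1}, and your route --- reducing $\delta_{ji}=0$ to $d^+(P_{i\mid\bm{x}})=0$ for every $\bm{x}\in\mathbb{X}_{\Pi_i\setminus j}$ via nonnegativity of the maximum, applying Proposition \ref{prop:1} slice by slice, and aggregating the resulting context-specific independences into $X_i\independent X_j\mid\bm{X}_{\Pi_i\setminus j}$ --- is precisely that consequence spelled out. Your flagged caveats (rows of $P_{i\mid\bm{x}}$ being pmfs, and zero-probability parent configurations) are handled at the same level of rigor as the paper, which throughout identifies CPT rows with the corresponding conditional pmfs.
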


Thus, in a formal sense, $\delta_{ji}$ is a measure of the extent by which this conditional independence is violated and the merit of knowing the value of $X_j$ to predict $X_i$ once we know the value of $\bm{X}_{\Pi_{i}\setminus j}$.

The following proposition links edge strength to the diameter of a CPT.

\begin{proposition}
\label{prop:5}
    It holds that $\delta_{ji}\leq d^+(P_i)$ and $d^+(P_i)\leq \sum_{j\in\Pi_i}\delta_{ji}$. Also if $|\Pi_i|=1$, then $\delta_{ji}=d^+(P_i)$.
\end{proposition}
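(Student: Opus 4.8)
The plan is to prove the three assertions in turn, throughout using the fact that $d_V$ is a metric (it is one half of the $L^1$ distance between pmfs), so that the triangle inequality is available. For the first inequality $\delta_{ji}\leq d^+(P_i)$, I would fix a configuration $\bm{x}\in\mathbb{X}_{\Pi_i\setminus j}$ of the parents other than $j$ and observe that every row of the sub-CPT $P_{i|\bm{x}}$ is itself a row of the full CPT $P_i$. Hence the set of ordered pairs of rows over which the maximum defining $d^+(P_{i|\bm{x}})$ is taken is a subset of the pairs defining $d^+(P_i)$, so $d^+(P_{i|\bm{x}})\leq d^+(P_i)$; taking the maximum over $\bm{x}$ gives the claim.

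The core of the argument is the second inequality, $d^+(P_i)\leq\sum_{j\in\Pi_i}\delta_{ji}$, which I would prove by interpolating between two parent configurations one coordinate at a time and then applying the triangle inequality. Write $\Pi_i=\{j_1,\dots,j_k\}$ and let $\bm{a},\bm{b}\in\mathbb{X}_{\Pi_i}$ be configurations attaining the diameter, so that $d^+(P_i)=d_V(p(\cdot|\bm{a}),p(\cdot|\bm{b}))$. Define intermediate configurations $\bm{c}^{(0)}=\bm{a},\bm{c}^{(1)},\dots,\bm{c}^{(k)}=\bm{b}$, where $\bm{c}^{(l)}$ agrees with $\bm{b}$ on the coordinates $j_1,\dots,j_l$ and with $\bm{a}$ on the remaining coordinates, so that consecutive configurations $\bm{c}^{(l-1)}$ and $\bm{c}^{(l)}$ differ only in the value of the parent $j_l$. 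The triangle inequality then yields
\[
d_V(p(\cdot|\bm{a}),p(\cdot|\bm{b}))\leq\sum_{l=1}^{k}d_V\bigl(p(\cdot|\bm{c}^{(l-1)}),p(\cdot|\bm{c}^{(l)})\bigr).
\]
Since $\bm{c}^{(l-1)}$ and $\bm{c}^{(l)}$ share a common value $\bm{x}^{(l)}\in\mathbb{X}_{\Pi_i\setminus j_l}$ on all parents other than $j_l$, each summand is the variation distance between two rows of the sub-CPT $P_{i|\bm{x}^{(l)}}$ and is therefore at most $d^+(P_{i|\bm{x}^{(l)}})\leq\delta_{j_l i}$. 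Summing over $l$ gives $d^+(P_i)\leq\sum_{j\in\Pi_i}\delta_{ji}$.

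Finally, the equality when $|\Pi_i|=1$ follows immediately: the sum on the right of the second inequality then reduces to the single term $\delta_{ji}$, so the two inequalities combine to give $\delta_{ji}\leq d^+(P_i)\leq\delta_{ji}$. (Equivalently, when $j$ is the sole parent, $\mathbb{X}_{\Pi_i\setminus j}$ is the trivial one-point space and $P_{i|\bm{x}}$ coincides with the whole of $P_i$.) I expect the only point requiring care to be the bookkeeping in the interpolation step, namely verifying that each single-coordinate transition leaves a genuine configuration of the remaining parents fixed, so that it can legitimately be bounded by the corresponding edge strength; this is routine once the intermediate configurations are set up as above.
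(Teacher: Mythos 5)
Your proof is correct. There is, however, nothing in the paper to compare it against: despite the statement that proofs are collated in the appendix, the appendix only proves Propositions \ref{prop:1}--\ref{prop:3} and Theorem \ref{theo:1}, and Proposition \ref{prop:5} is left without an argument. Your write-up therefore supplies exactly the missing details, and each step is sound. The first inequality holds because every row of a sub-CPT $P_{i|\bm{x}}$ is a row of $P_i$, so the maximum defining $d^+(P_{i|\bm{x}})$ ranges over a subset of the pairs defining $d^+(P_i)$. The second inequality is the standard hybrid (telescoping) argument: it is legitimate because $d_V$ is half the $\ell_1$ distance and hence a genuine metric, and because $\mathbb{X}_{\Pi_i}=\times_{j\in\Pi_i}\mathbb{X}_j$ is a product space, so each intermediate configuration $\bm{c}^{(l)}$ is a valid row index of $P_i$ and each consecutive pair $\bm{c}^{(l-1)},\bm{c}^{(l)}$ lies in the common sub-CPT $P_{i|\bm{x}^{(l)}}$ with $\bm{x}^{(l)}\in\mathbb{X}_{\Pi_i\setminus j_l}$, whence each summand is bounded by $\delta_{j_l i}$ (the degenerate case $a_{j_l}=b_{j_l}$ contributes zero and causes no harm). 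The case $|\Pi_i|=1$ follows either by combining the two inequalities or, as you observe, directly, since then $P_{i|\bm{x}}$ is all of $P_i$. The only stylistic remark is that your appeal to the metric property of $d_V$ is an ingredient the paper never states explicitly (its lemmas instead develop convexity and coupling-type identities), but it is elementary and entirely consistent with the paper's framework.
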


Notice that to derive the edge strength $\delta_{ji}$ from elicitation only, $|\mathbb{X}_{\Pi_i\setminus j}|$ diameters must be defined. Edges that appear to be weak could then be discarded before attempting a full quantitative elicitation of the complete CPT, since the size of CPTs increase quadratically with the number of parents.

\subsection{Edge weigthed influence}

Given a DAG and edges labeled with their strength, we may be interested in quantifying the effect of a node on another. For this task, we define a novel measure we henceforth call \textit{edge weighted influence}. Recall that $S_{ji}$ is the set of active, simple trails from $j$ to $i$.

\begin{definition}
    The edge weigthed influence of $X_j$ on $X_i$, $EWI(X_j, X_i)$, is defined as:
    \begin{equation}
        \label{eq:ewi}
        EWI(X_j,X_i)=\sum_{s\in S_{ji}}\left(\prod_{(k,l)\in s} \delta_{kl}\right)^{|s|}.
    \end{equation}
\end{definition}

The definition of the edge weighted influence is inspired by the one of the distance weighted influence where, instead of giving a weight $w$ to every edge of the BN, we consider the edge strengths $\delta_{ji}$. The edge weighted influence sits inbetween mutual information, which requires a full BN definition, and the distance weighted influence, only requiring the DAG, since it is based on the DAG together with some measure of edge strength.

In our examples, the edge weighted distance is derived using the edge strengths computed from the full CPTs, but any measure of edge strength could be considered. For instance, it could be the proportion of times an edge has appeared in non-parametric bootstrap structural learning. An alternative is if these strengths were somehow elicited directly from experts.

\begin{table}[]
    \centering
    \scalebox{0.65}{
    \begin{tabular}{|c|c|c|c|c|c|c|}
    \cline{3-6}
    \multicolumn{1}{c}{}&\multicolumn{1}{c}{} & \multicolumn{4}{|c|}{DWI} &\multicolumn{1}{c}{}\\
    \hline
    Name & Mutual Information & $w=0.1$ & $w=0.2$&  $w=0.5$ & $w=1$ & EWI \\
    \hline
        \texttt{GP} &  0.00132 (6)& 0.01121 (11) & 0.05152 (13)& 0.53125 (14) & 5 (12)  & 0.00645 (4) \\
        \texttt{LARMAR} & 0.00094 (10) & 0.01363 (9) &  0.07683 (9)& 1.70898 (7) & 65 (5) & 0.00145 (8)\\
         \texttt{INPDGD} & 0.00166 (4) & 0.01397
 (7) &0.08270 (7) & 1.91797 (5) & 55 (7) & 0.00075 (10) \\
         \texttt{INPDSV} & 0.00104 (9) & 0.01448 (6) & 0.09372 (5)& 3.05078 (2) & 136 (3) & 0.00351 (6)\\
         \texttt{INPD} & 0.00410 (2) & 0.11111 (1) & 0.24992 (1) & 0.96875 (10) & 5 (12) & 0.09434 (2)\\
         \texttt{INABA}& 0.00029 (13) & 0.00337 (14) & 0.04374 (14)& 2.56641 (4) & 131 (4) & 0.00005 (13)\\
         \texttt{INONG} & 0.00128 (7)  & 0.00508 (12) & 0.05262 (12) & 1.88672 (6)& 57 (6) & 0.00009 (12)\\
         \texttt{CO} & 0.00076 (11) & 0.01390 (8) & 0.08017 (8)& 1.53906 (8) & 31 (8) & 0.00089 (9)\\
         \texttt{ORG} & 0.00147 (5) & 0.02221 (3) & 0.09952 (4) & 0.90625 (12) & 7 (10) & 0.00352 (5) \\
         \texttt{MKT} & 0.00074 (12) & 0.01618 (4) &0.10517 (3) & 3.42188 (1) & 203 (2) & 0.00032 (11)\\
         \texttt{PUB} & 0.00001 (14) & 0.00503 (13) & 0.05401 (11) &2.83301 (3) & 208 (1) & 0.00000 (14)\\
         \texttt{EMP12} & 0.00503 (1) & 0.11111 (1) &  0.24992 (1) & 0.96875 (10) & 5 (12) & 0.12116 (1)\\
         \texttt{EMPUD} & 0.00105 (8) & 0.01321 (10) & 0.06752 (10)& 0.78125 (13) & 7 (10) & 0.00232 (7)\\
         \texttt{RR}  & 0.00277 (3) &  0.01465 (5) & 0.08339 (6)& 1.32812 (9) & 19 (9) & 0.00774 (3) \\
         \hline
        \multicolumn{1}{c}{} &\multicolumn{1}{c|}{}& 0.65566 & 0.52805 & -0.67472 & -0.80443 & 0.81538\\
         \cline{3-7}
    \end{tabular}
    }
\caption{Node influence measures in the \texttt{istat} BN considering \texttt{GROWTH} as output node. In parenthesis the node ranking for each column. The last row reports the Spearman correlation between mutual information and the other influence measures.}
    \label{tab:ewi_istat}
\end{table}

To illustrate the use of the edge weigthed distance we consider the \texttt{istat} BN, since \texttt{asia} has a too simple topology. Table \ref{tab:ewi_istat} reports the mutual information, distance weighted influence for various choices of $w$, and the edge weighted distance between \texttt{GROWTH} and every other node. As already noticed by \citet{Albrecht2014} the $DWI$ measures varies greatly with $w$, with little intuition behind what an optimal $w$ value might actually be. On the other hand, $EWI$ does not require the choice of $w$, of course at the cost of having to specify edge strengths, which on the other hand have a much more straightforward meaning. The highest Spearman correlation with mutual information, reported in the  last row of Table \ref{tab:ewi_istat}, is attained by $EWI$ showcasing how the underlying DAG together with edge strengths gives a very precise approximation to the full BN model.

\begin{figure}
    \centering
    \includegraphics[scale=0.25]{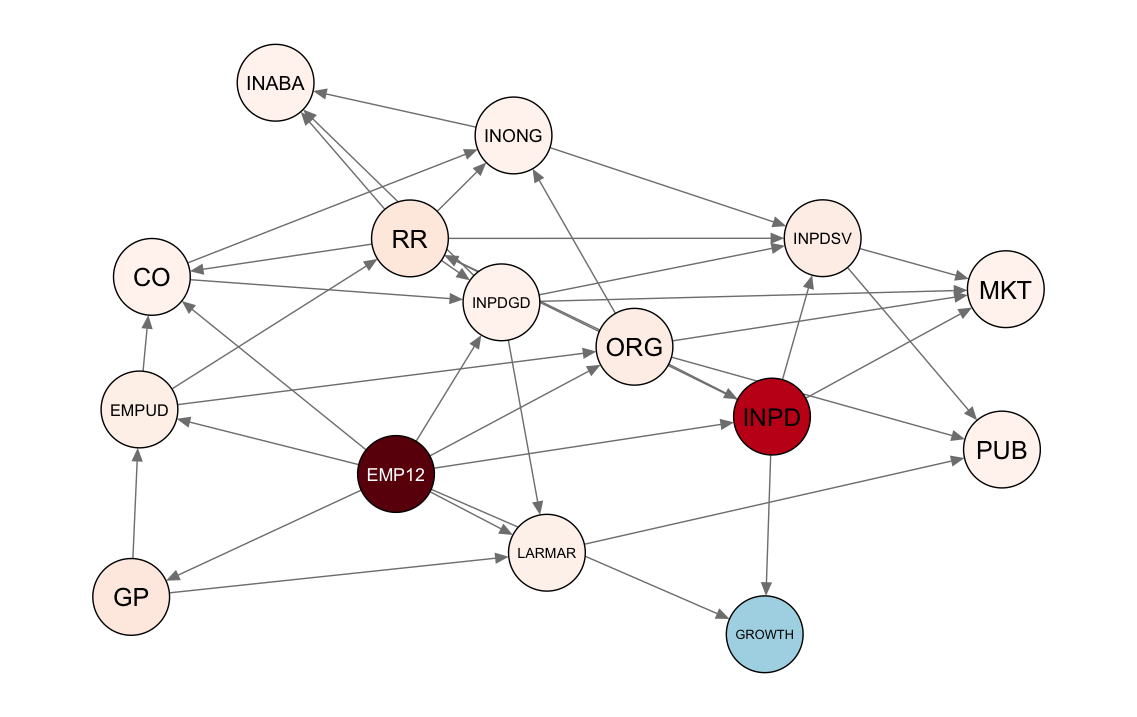}
    \caption{Heatmap of $EWI$ in the \texttt{istat} BN with \texttt{GROWTH} as output (reported in light blue).}
    \label{fig:heatmap_istat}
\end{figure}

The two variables with the greatest effect on \texttt{GROWTH} are, as expected, its parents \texttt{EMP12} and \texttt{INPD}. However, because of the weak dependence between \texttt{GROWTH} and its parents and the actual DAG topology, the EWI of the other nodes quickly vanishes. Figure \ref{fig:heatmap_istat} reports a heatmap of the EWI which clearly provide a visualization of this. Such a heatmap can be highly valuable during elicitation of the entries of the CPTs to convince practitioners on focusing on the specification of the probabilities of the most important factors first.

\subsection{Level amalgamation}

One practical issue found by discrete BN modellers is the number of levels each random variable within the system should be assigned. Obviously there is a trade-off here. The finer the division of levels, the more nuanced the BN can be. On the other hand, the fewer the number of levels, the easier it will be to faithfully elicit or efficiently estimate the probabilities within a BN. The technology of the diameter can be adapted to guide possible amalgamations of the levels of categorical variables, just as when considering whether or not to keep a weak edge in the system.

A first consideration to take into account is that the interpretation of the states can still be understood and quantified by experts. For ordinal variables is therefore recommended to only consider merging consecutive levels (for instance for the variable \texttt{EMP12} with levels 10-49/50-249/$>50$, one should consider merging only the first two or the last two levels). 

The second step in level amalgamation is deciding how to combine the probabilities associated with those levels that are to be amalgamated. Suppose the levels of variable $X_i$ are being merged. In the CPT $P_i$ then the probabilities of the two associated levels are simply summed up. However, also the CPTs of a child of $Y_i$ need to be adapted because of amalgamation. We recommend taking a simple row average, because the convexity of variation distance (Lemma \ref{lemma:3}) guarantees that the diameter of the original CPT does not increase, and, more importantly, this method does not require additional information. In practice we can calculate the diameter of the CPTs with levels amalgamated and combine the closest states first, then find the next closest states and add to the amalgamation iteratively until the combination appears to induce a significant change from the original diameter.

\begin{table}
    \centering\scalebox{0.5}{
    \begin{tabular}{|C{3.5em}|C{6.5em}|C{4em}|C{4em}|C{4em}|C{4em}|C{4em}|C{4em}|C{4em}|C{4em}|}
    \cline{3-10}
    \multicolumn{1}{c}{} &\multicolumn{1}{c|}{} & \texttt{GP}   & \texttt{LARMAR} & \texttt{INPDGD} & \texttt{INPD} & \texttt{CO} & \texttt{ORG} & \texttt{EMPUD} & \texttt{GROWTH} \\
     \hline
  \multirow{3}{*}{\texttt{EMP12}}   & 10-49/50-249    & 0.461 & 0.541 &0.857  & 0.607 & 0.484 & 0.495 & 0.404 & 0.123\\
    & 50-249/$>$250 &  0.538 & 0.597 & 0.788 & 0.577 & 0.376 & 0.456 & 0.500 & 0.153\\
    & Original &0.666 & 0.611 &0.857 & 0.648 & 0.484 & 0.546 & 0.521 & 0.167\\
     \hline
    \end{tabular}}

\vspace{0.3cm}
\scalebox{0.5}{
\begin{tabular}{|C{3.5em}|C{6.5em}|C{4em}|C{4em}|C{4em}|}
    \cline{3-5}
    \multicolumn{1}{c}{} &\multicolumn{1}{c|}{} & \texttt{CO}   & \texttt{ORG} & \texttt{RR} \\
     \cline{1-5}
  \multirow{3}{*}{\texttt{EMPUD}}   & 0\%/$<$10\%    & 0.484 & 0.483 &0.907  \\
    & $<$10\%/$>$10\% &  0.403 & 0.458 & 0.941 \\
    & Original & 0.484 & 0.546 & 0.941\\
     \hline
    \end{tabular}}
  \hspace{1cm}
\scalebox{0.5}{
\begin{tabular}{|C{3.5em}|C{10em}|C{4em}|}
    \cline{3-3}
    \multicolumn{1}{c}{} &\multicolumn{1}{c|}{} & \texttt{PUB}   \\
     \cline{1-3}
  \multirow{3}{*}{\texttt{LARMAR}}   & Regional/National   & 0.111   \\
    & National/International &  0.186  \\
    & Original & 0.344\\
     \hline
    \end{tabular}}
    \caption{Diameters of the CPTs in the \texttt{istat} BN resulting from the amalgamation of levels for the variables \texttt{EMP12}, \texttt{EMPUD}, and \texttt{LARMAR}.}
    \label{tab:amalgamation}
\end{table}

Table \ref{tab:amalgamation} reports the diameters resulting from the amalgamation of levels for variables with more than two levels in the \texttt{istat} BN. It can be seen that merging the levels of \texttt{LARMAR} leads to a strong decrease in the diameter. On the other hand, overall the diameter of the children CPTs of the variables \texttt{EMP12} and \texttt{EMPUD} does not show a strong decrease, highlighting that the amalgamation of a pair of levels would have a small effect on the BN.

\subsection{Asymmetry strength}
\label{sec:asy}
There is now an increasingly amount of evidence that standard conditional independence may be too restrictive to faithfully and fully represent dependence patterns in data \citep{eggeling2019algorithms,leonelli2024structural,pensar2015labeled}. For BNs this means that there are equalities between probability distributions within the CPTs, which therefore have no graphical counterpart in the underlying DAG.

The simplest class of constraints that could appear in a CPT of a BN is the so-called \textit{context-specific} conditional independence \citep{boutilier1996context}. We say that $X_i$ is conditionally independent of $X_j$ given context $X_k=x_k$ if $p(x_i | x_j, x_k) = p(x_i|x_k)$ for all $x_i\in\mathbb{X}_i, x_j\in\mathbb{X}_j$ and a specific $x_k\in\mathbb{X}_k$. A context-specific independence reduces to a standard, symmetric independence if it holds for all $x_k\in\mathbb{X}_k$. Consider the CPT in Table \ref{table:example} of a random variable $X_i$ conditional on $X_j$ and $X_k$, all taking levels high, medium, and low. It can be seen that $X_i$ is conditionally independent of $X_j$ when $X_k = \text{high}$ since all rows where $X_k = \text{high}$ have the same pmf.

\begin{table}[]
    \centering
    \scalebox{0.6}{
    \begin{tabular}{|C{3.5em}|C{3.5em}|C{5.5em}|C{5.5em}|C{5.5em}|}
    \hline
    \multirow{2}{*}{$X_j$} & \multirow{2}{*}{$X_k$} & \multicolumn{3}{c|}{$P(X_i|X_j,X_k)$} \\
    \cline{3-5}
         & & high & medium & low \\
         \hline
    high & high &  0.500 & 0.300 & 0.200\\
    \hline
    high & medium & 0.400 & 0.200 & 0.500\\
    \hline 
    high & low  &  0.300 & 0.100 & 0.600\\
    \hline 
    medium & high & 0.500 & 0.300 & 0.200 \\
    \hline
    medium & medium  &  0.300 & 0.500 & 0.200\\
    \hline 
    medium & low &  0.200 & 0.400 & 0.400 \\
    \hline 
    low & high & 0.500 & 0.300 & 0.200\\
    \hline
    low & medium & 0.200 & 0.200 & 0.600\\
    \hline
    low & low & 0.200 & 0.200 & 0.600 \\
    \hline
    \end{tabular}
    }
    \caption{Example of a CPT embedding context-specific and partial conditional indepenences.}
    \label{table:example}
\end{table}

\citet{pensar2016role} introduced a more general extension to conditional independence called \textit{partial conditional independence}. We say that $X_i$ is partially conditionally independent of $X_j$ in the domain $\mathcal{D}_j\subseteq \mathbb{X}_j$ given context $X_k=x_k$ if $p(x_i | x_j, x_k)=p(x_i|\tilde{x}_j,x_k)$ holds for all $(x_i,x_j),(x_i,\tilde{x}_j)\in\mathbb{X}_i\times\mathcal{D}_j$. Partial and context-specific independence coincides if $\mathcal{D}_j=\mathbb{X}_j$. Furthermore, the sample space $\mathbb{X}_j$ must contain more than two elements for a non-trivial partial conditional independence to hold. The CPT in Table \ref{table:example} embeds the partial conditional independence between $X_i$ and $X_k$ in the domain $\{\text{medium},\text{low}\}$ in the context $X_j=\text{low}$.

Currently, there are no methods to assess whether additional equalities are present in the CPTs of a BN, beyond visual investigation. Here we demonstrate that the diameter can be also be used for this task,  just as when considering whether or not
to keep a weak edge in the system.  

\begin{definition}
The index of context-specific independence between $X_i$ and $X_j$, for $j\in \Pi_i$, in the context $\bm{x}\in\mathbb{X}_{\Pi_i\setminus j }$ is
\[
\delta_{\bm{x}i}^+ = \ d^+(P_{i | \bm{x}}),
\]
and the index of partial independence is
\[
\delta_{\bm{x}i}^- = \ d^-(P_{i | \bm{x}}).
\]
\end{definition}

The index of context-specific independence computes the upper diameter of a CPT where all parents but one are fixed, while the index of partial independence computes the lower diameter for the same CPT. If the lower diameter is close to zero it means that there are at least two rows of the CPT which are very similar to each other, thus implying a partial conditional independence.

\begin{table}[]
    \centering
    \scalebox{0.6}{
    \begin{tabular}{|c|c|}
    \hline
    \multicolumn{2}{|c|}{\texttt{GP}}\\
    \hline
    \texttt{EMP12} & Context\\
    \hline
      10-49   &  0.255\\
       50-249  &   0.224 \\
       $>250$ &  0.284 \\
       \hline
    \end{tabular}}
    \hspace{0.5cm}
       \scalebox{0.6}{
        \begin{tabular}{|c|c|c|}
    \hline
    \multicolumn{3}{|c|}{\texttt{EMP12}}\\
    \hline
    \texttt{GP} &Context & Partial\\
    \hline
      No   & 0.398& 0.007\\
       Yes  &   0.265 &0.131 \\
       \hline
    \end{tabular}}
    \caption{Context-specific and partial independence indices for the CPT of the variable \texttt{EMPUD} in the \texttt{istat} BN.}
    \label{table:asy}
\end{table}

Propositions \ref{prop:4} and \ref{prop:5} could be straightforwardly extended to link the context-specific index to context-specific independence and bound the index with the diameter of the CPT.

As an illustration Table \ref{table:asy} reports the context-specific and partial indices for the CPT of the variable \texttt{EMPUD} in the \texttt{istat} BN. The context-specific independence indices are quite close to each other, with a very strong dependence between \texttt{EMP12} and \texttt{EMPUD} in the context \texttt{GP} = No. The partial independence index is only reported for the \texttt{EMP12} variables since it is ternary. It can be seen that in the context \texttt{GP} = No it is quite plausible the presence of a partial independence: the probability distribution of \texttt{EMPUD} is very similar for at least two levels of \texttt{EMP12} in the context \texttt{GP} = No. \ref{sec:staged} shows that a model selection algorithm for a class of models that formally embeds also non-symmetric types of independence does indeed learn the mentioned partial independence. 

\section{Error propagation}
\label{sec:error}

It is well known that when using standard propagation algorithms on updating one of the clique margins, say $C_i$, the knock-on effect on the other clique margins becomes weaker and weaker as the updated cliques become progressively more remote from $C_i$. This property is exploited by \citet{Albrecht2014} when defining the DWI influence, for instance, and thus also in the EWI distance. It is also known that if the underlying DAG is a polytree, the mutual influence between two nodes decreases with the distance (number of edges) between them.

The extent of the deviation can be bounded using variation distance, providing an upper limit to the potential error in the distributions of focus variables induced by the misspecification of various CPTs in the BN. This is particularly useful when we elicit a large BN and want to know how far away from target nodes we need to elicit the corresponding CPTs accurately. The following result provides the basis for the derivation of such bounds.

\begin{theorem}
\label{theo:1}
Consider two categorical random variables $X$  and $Y$. Let $p$ and $p'$ two pmfs over $(X,Y)$ such that $p(y|x)=p'(y|x)$ for all $x\in\mathbb{X}$ and $y\in \mathbb{Y}$. Then
\[
d_V(p(y),p'(y))\leq d^+(P_{Y|X})d_V(p(x),p'(x)).
\]
\end{theorem}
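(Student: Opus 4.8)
The plan is to exploit the hypothesis that $p$ and $p'$ share the same conditional $P_{Y|X}$, so that the two marginals of $Y$ differ only through the marginals of $X$. First I would write $p(y)=\sum_{x\in\mathbb{X}}p(y\mid x)p(x)$ and, since $p'(y\mid x)=p(y\mid x)$, likewise $p'(y)=\sum_{x\in\mathbb{X}}p(y\mid x)p'(x)$. Subtracting gives
\[
p(y)-p'(y)=\sum_{x\in\mathbb{X}}p(y\mid x)\,\Delta(x),\qquad \Delta(x):=p(x)-p'(x),
\]
where $\sum_{x\in\mathbb{X}}\Delta(x)=0$ because both marginals are normalised.

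Next I would split $\Delta$ into its positive and negative parts, setting $A=\{x:\Delta(x)>0\}$, $B=\{x:\Delta(x)<0\}$ and $M=\sum_{x\in A}\Delta(x)$. Since $\sum_x\Delta(x)=0$ we have $M=\sum_{x\in B}|\Delta(x)|=\tfrac12\sum_{x}|\Delta(x)|=d_V(p(x),p'(x))$. Assuming $M>0$ (the case $M=0$ forces $p=p'$ and the claim is trivial), I define two pmfs on $\mathbb{Y}$,
\[
q(y)=\frac1M\sum_{x\in A}p(y\mid x)\,\Delta(x),\qquad r(y)=\frac1M\sum_{x\in B}p(y\mid x)\,|\Delta(x)|,
\]
each a convex combination of rows of $P_{Y|X}$. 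Then $p(y)-p'(y)=M\bigl(q(y)-r(y)\bigr)$, hence $d_V(p(y),p'(y))=M\,d_V(q,r)$, and the theorem reduces to proving $d_V(q,r)\le d^+(P_{Y|X})$.

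For this last inequality I would use the convexity of variation distance (Lemma~\ref{lemma:3}). Denoting the rows by $p_i=p(\cdot\mid x=i)$ and the weights by $\alpha_i=\Delta(i)/M$ for $i\in A$ and $\beta_j=|\Delta(j)|/M$ for $j\in B$, both weight vectors sum to one, so
\[
q-r=\sum_{i\in A}\sum_{j\in B}\alpha_i\beta_j\,(p_i-p_j).
\]
The triangle inequality then yields $d_V(q,r)\le\sum_{i\in A}\sum_{j\in B}\alpha_i\beta_j\,d_V(p_i,p_j)\le d^+(P_{Y|X})$, because each pairwise distance is at most the diameter while the weights sum to one. Multiplying through by $M=d_V(p(x),p'(x))$ gives the stated bound.

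The only delicate point is the bookkeeping in the second step: one must verify that $q$ and $r$ are genuine pmfs and that the scalar pulled out in front is exactly $d_V(p(x),p'(x))$, both of which rest on the identity $\sum_{x\in A}\Delta(x)=-\sum_{x\in B}\Delta(x)=M$. Once this decomposition of $p(y)-p'(y)$ into a scaled difference of two row-mixtures is secured, the contraction by the diameter is an immediate consequence of convexity, and no further estimation is needed.
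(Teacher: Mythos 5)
Your proof is correct and is essentially the paper's own argument in different notation: your split of $\Delta(x)=p(x)-p'(x)$ into positive and negative parts with total mass $M=d_V(p(x),p'(x))$, and the resulting row-mixtures $q$ and $r$, reproduce exactly the decomposition of Lemma~\ref{lemma:2} (there $1-\beta=M$, $\bar p=\Delta^+/M$, $\bar p'=\Delta^-/M$, and your $q,r$ are precisely $\bar p P_{Y|X}$ and $\bar p' P_{Y|X}$). Your final step, expanding $q-r$ bilinearly over pairs of rows and applying the triangle inequality, is the same estimate the paper obtains by applying Lemma~\ref{lemma:3} twice, so the two proofs coincide in substance.
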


The interpretation of this result is as follows. Suppose the CPT of $Y$ given $X$ has been specified accurately, but that the margin probability of $Y$ is uncertain. Then our marginal beliefs about $Y$ are no more uncertain than those about $X$, because by definition $d^+(P_{Y|X})\leq 1$. More importantly, we have a bound on how much the uncertainty, quantified in terms of total variation, reduces in terms of $d^+(P_{Y|X})$ - a measure of how far away $Y$ is from independence of $X$.

We can now use Theorem \ref{theo:1} to provide a bound of the effect of perturbation of an output variable on the cliques of the junction tree of a BN.

\begin{theorem}
    Consider an output variable $X_j$ which has no children in the BN and assume it is in clique $C_j$ with no loss of generality. Let $C_i$ be another clique and $S_{i+1},\dots,S_{j}$ be the separators along the unique path between $C_i$ and $C_j$. Let $S_k^*=S_k\setminus \cup_{l=k+1}^{j}S_l $, for $k=i+1,\dots,j$, $P_k^*$ be the CPT representing $p(\bm{x}_{S^*_k}|\bm{x}_{S^*_{k-1}})$ for $k=i+2,\dots,j$ and $P_j$ be the CPT of $p(\bm{x}_{j}|\bm{x}_{S^*_{j}})$. Then for any two pmfs $p$ and $p'$ we have that
    \[
    d_V(p(x_j),p'(x_j))\leq d^+(P_j)\prod_{k=i+2}^j d^+\left(P^*_k\right)d_V(p(\bm{x}_{C_i}),p'(\bm{x
}_{C_i})).
    \]
\end{theorem}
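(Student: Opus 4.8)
The plan is to read the right-hand side as the cumulative contraction of total variation distance as a perturbation at the clique $C_i$ is propagated along the unique separator path toward $X_j$, and then to apply Theorem \ref{theo:1} once per transition. The starting point is Equation (\ref{eq:markov}): marginalising it over every variable except $X_j$ exhibits $p(x_j)$ as the output of a non-homogeneous Markov chain whose initial distribution is $p(\bm{x}_{C_i})$ and whose transition kernels are, in order, $P^*_{i+2},\dots,P^*_j$ (representing $p(\bm{x}_{S^*_k}\mid\bm{x}_{S^*_{k-1}})$) followed by $P_j$ (representing $p(x_j\mid\bm{x}_{S^*_j})$). As in the hypothesis of Theorem \ref{theo:1}, I would take $p$ and $p'$ to share all of these conditional kernels, differing only in the clique margin at $C_i$; this is exactly the error-propagation setting in which only the distribution over the donating clique is perturbed, and it is the reading under which the stated inequality can hold.

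First I would reduce from $C_i$ to $S^*_{i+1}$. Since $S^*_{i+1}\subseteq C_i$, passing from $\bm{x}_{C_i}$ to $\bm{x}_{S^*_{i+1}}$ is pure marginalisation, which cannot increase total variation distance, giving $d_V(p(\bm{x}_{S^*_{i+1}}),p'(\bm{x}_{S^*_{i+1}}))\leq d_V(p(\bm{x}_{C_i}),p'(\bm{x}_{C_i}))$. This is the degenerate case of Theorem \ref{theo:1} in which the kernel is the deterministic projection whose diameter is at most one, and it accounts for why the explicit product begins only at $k=i+2$.

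Next I would define, for both $p$ and $p'$, the sequence of intermediate marginals $p(\bm{x}_{S^*_k})$ obtained by successively applying the shared kernels to $p(\bm{x}_{S^*_{i+1}})$. Because each step replaces $\bm{x}_{S^*_{k-1}}$ by $\bm{x}_{S^*_k}$ through the common stochastic matrix $P^*_k$, leaving that matrix unchanged between $p$ and $p'$, Theorem \ref{theo:1} applies verbatim at each stage with $X=\bm{X}_{S^*_{k-1}}$ and $Y=\bm{X}_{S^*_k}$, yielding
\[
d_V(p(\bm{x}_{S^*_k}),p'(\bm{x}_{S^*_k}))\leq d^+(P^*_k)\,d_V(p(\bm{x}_{S^*_{k-1}}),p'(\bm{x}_{S^*_{k-1}})).
\]
A final application with $X=\bm{X}_{S^*_j}$ and $Y=X_j$ contributes the factor $d^+(P_j)$. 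Telescoping these inequalities from $k=i+2$ up to $j$, then multiplying by this last factor and the initial marginalisation bound, produces the claimed product form.

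The step I expect to be most delicate is verifying that Equation (\ref{eq:markov}), after marginalisation, genuinely presents $p(x_j)$ as a \emph{first-order} chain in the starred separators, that is, that $p(\bm{x}_{S^*_k}\mid\bm{x}_{S^*_{k-1}})$ is the only dependence carried forward at step $k$. This is where the running intersection property of the junction tree enters: it guarantees that the starred sets $S^*_k$ partition the path variables so that conditioning on $\bm{x}_{S^*_{k-1}}$ screens off all earlier separators. This screening is what makes each $P^*_k$ a bona fide transition kernel with no hidden dependence on upstream variables, and hence what licenses the clean, repeated invocation of Theorem \ref{theo:1} at every stage.
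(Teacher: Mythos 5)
Your proposal is correct and follows essentially the same route the paper takes: the paper's own proof is just the remark that the result ``follows easily from Equation (\ref{eq:markov}) and successive application of Theorem \ref{theo:1}'', which is exactly your telescoping of Theorem \ref{theo:1} along the non-homogeneous Markov chain of starred separators. Your additional observations --- that $p$ and $p'$ must share the conditional kernels, that the initial step $C_i \to S^*_{i+1}$ is a pure marginalisation (explaining why the product starts at $k=i+2$), and that the running intersection property is what licenses the first-order chain structure --- make explicit details the paper leaves implicit.
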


The proof of this result follows easily from Equation (\ref{eq:markov}) and successive application of Theorem \ref{theo:1}. Notice that the quantities $d^+(P^*_k)$ may or may not be readily available from a fully defined BN. In the best case scenario, their computation may simply require the use of Propositions \ref{prop:2} and \ref{prop:3} which take advantage of simple properties of the diameter. In other cases, when a separator $S_{k+1}$ includes parents of variables in $S_{k}$, the required CPT must be constructed using inferential queries over the BN.

Next we define the impact of one clique upon a variable of interest in order to ascertain the diminishing effect of errors downstream in the chain of a junction tree.

\begin{definition}
    The impact of $C_i$ on an output variable $Y_j$ in clique $C_j$ is $d^+(P_j)\prod_{k=i+2}^j d^+\left(P^*_k\right)$.
\end{definition}

The impact of one clique on an output is a simple measure of the maximum possible influence the misspecification of one set of probabilities could have on another as measured by a bound on the variation distance. As an illustration, Figure \ref{fig:junction_color} shows the junction tree of the \texttt{asia} BN together with the clique impact when the variable \texttt{xray} is the output of interest. The impact of the clique \texttt{either}, \texttt{lung}, \texttt{tub} is 0.93, which is simply the diameter of the CPT of \texttt{xray}. The impact of the cliques \texttt{asia}, \texttt{tub} and \texttt{lung}, \texttt{either}, \texttt{bronc} is also 0.93 since this is equal to the diameter of the CPT of \texttt{xray} times the diameter of the CPT of \texttt{either}, which is equal to 1. The remaining cliques have impact $0.93\cdot 1 \cdot 0.09$: that is, the impact of \texttt{bronc}, \texttt{lung}, \texttt{either} times the diameter of the CPT of \texttt{lung}. 

\begin{figure}
    \centering
    \includegraphics[scale=0.25]{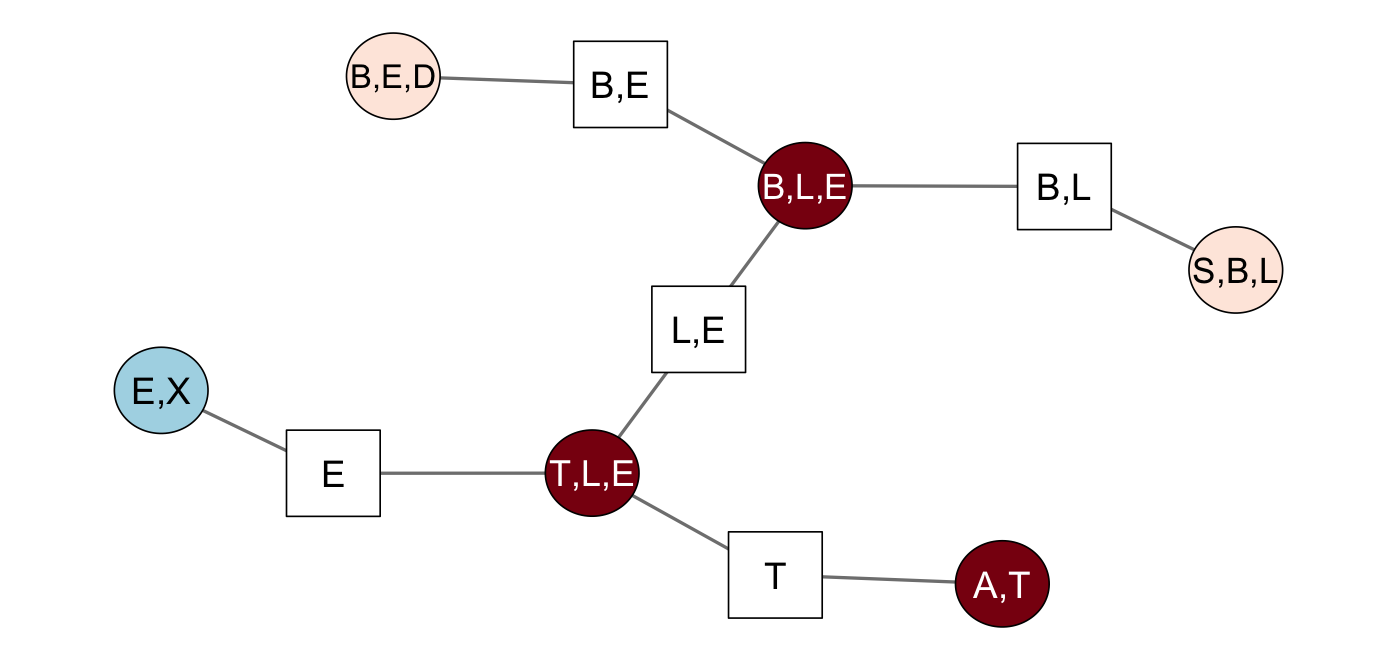}
    \caption{The junction tree representing the \texttt{asia} BN where cliques are colored by their impact on \texttt{xray}.}
    \label{fig:junction_color}
\end{figure}

Notice that because the diameter is bounded by one, we have the following nice property, confirming that cliques further away from the output have a smaller effect. 

\begin{proposition}
    The impact of a clique $C_i$ on an output variable $Y_j$ in clique $C_j$ is smaller than the impact of $C_k$ on $C_j$ for any $C_k$ along the unique path between $C_i$ and $C_j$.
\end{proposition}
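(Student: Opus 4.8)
The plan is to read the inequality straight off the product form of the impact, exploiting that every diameter factor lies in $[0,1]$.

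First I would fix notation for the path. Let the unique path in the junction tree joining $C_i$ and $C_j$ be $C_i,C_{i+1},\dots,C_j$, and let $C_k$ be an intermediate clique on it, so that $i<k<j$ in this ordering. Because the junction tree is a tree, this path is unique and the sub-path from $C_k$ to $C_j$ is exactly its terminal segment $C_k,\dots,C_j$. The starred separators $S_l^*=S_l\setminus\cup_{m=l+1}^{j}S_m$, and hence the transition CPTs $P_l^*$, are defined relative to the \emph{common} target clique $C_j$; consequently the factors $d^+(P_l^*)$ for $l=k+2,\dots,j$ are identical in the impact of $C_i$ and in the impact of $C_k$.

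Next I would simply factorise the defining product. Writing the impact of $C_i$ on $Y_j$ and separating out the factors indexed by the separators lying strictly between $C_i$ and $C_k$,
\[
d^+(P_j)\prod_{l=i+2}^{j} d^+(P_l^*) = \left(\prod_{l=i+2}^{k+1} d^+(P_l^*)\right) d^+(P_j)\prod_{l=k+2}^{j} d^+(P_l^*),
\]
where the final group of factors $d^+(P_j)\prod_{l=k+2}^{j} d^+(P_l^*)$ is precisely the impact of $C_k$ on $Y_j$.

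Finally, because each $d^+(P_l^*)$ is a total variation distance it satisfies $0\leq d^+(P_l^*)\leq 1$, so the leading product $\prod_{l=i+2}^{k+1} d^+(P_l^*)$ is at most $1$, and the impact of $C_i$ is therefore no larger than that of $C_k$. The only real care needed — and the mild obstacle — is the index bookkeeping: one must verify that the dropped factors correspond exactly to the separators lying strictly between $C_i$ and $C_k$, and that the starred sets $S_l^*$ coincide across the two computations because they share the target $C_j$. Both follow from the contiguity of the terminal sub-path guaranteed by the tree structure. (Note that equality can occur when the intervening diameters equal one, as in the \texttt{asia} example above, so the statement is really the bound $\leq$ rather than strict inequality.)
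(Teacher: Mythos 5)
Your proof is correct and matches the paper's own (very brief) justification: the paper proves this proposition purely by the remark that each diameter factor lies in $[0,1]$, so the impact of $C_i$ is the impact of $C_k$ multiplied by the extra factors $\prod_{l=i+2}^{k+1} d^+(P_l^*)\leq 1$, exactly as you spell out. Your added observations — that the starred separators $S_l^*$ coincide because they are defined relative to the common target $C_j$, and that the inequality is really non-strict (equality occurs when intervening diameters equal one, as in the \texttt{asia} example) — are accurate refinements of the paper's one-line argument rather than a different route.
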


These influences provide a very useful tool for prioritisation of the elicitation in a BN. If we can obtain estimates of influence across a junction tree (either from direct elicitation or alternatively after having performed a preliminary coarse elicitation of the corresponding CPTs) then we can use these influences to identify which of those CPTs to refine. A further discussion of how the results of this paper can be used for elicitation is next.

\section{The diameter as an elicitation tool}

We have hinted throughout this paper at possible ways in which the diameter can be used to evaluate the state of a BN elicitation from experts. These evaluations can be embedded into a formal protocol. However, there are many considerations that a user needs to make before undertaking model construction: transparency of the model, computational issues, elicitation constraints and so forth, which vary in importance depending on the context of the model. So, setting a bound on any effects or perturbations against differing approaches is often best undertaken informally. We acknowledge that the framework we have presented here is sufficiently formal to admit generalisation, and this is work that we plan to undertake in the future.

To implement our techniques as efficiently as possible, we recommend two differing approaches tailored to the specific circumstances of the modeller. Firstly, there are occasions in which we have obtained provisional information from one expert who can recommend nodes, levels, interactions and provisional CPTs before undertaking a more formal elicitation conference with multiple experts \citep[see e.g.][]{barons2018assessment}. In this particular scenario, we can begin to design the analysis by using the bounds discussed earlier on the preliminary values stated by the expert. We recommend starting by eliciting attributes and nodes of interest before working systematically backwards along the chain of inference to discover parent nodes and conditional independences, performing variation measures on preliminary CPT values to determine the efficacy of including variables in the model. Of course, after the complete elicitation has taken place, the robustness analyses suggested above can be repeated for a final sensitivity analysis.

In situations where we are starting the model with no such preliminary information, it may be wisest to attempt to elicit the value of the diameter of the CPT directly before eliciting the entire matrix so that complete elicitation is not undertaken before we can derive concrete bounds on the usefulness of this data harvesting exercise. To elicit the diameter directly, we need to ascertain the largest differences between rows of a CPT, which corresponds to requesting the “best case scenario” and “worst case scenario” probabilities before calculating the variation distance between the two.

By following this procedure, we, therefore, continuously appraise and compare each possible simplification against the potential accuracy of an analysis, weighted against the issues provided by a simpler model representation. We have demonstrated above that in many cases, the effects of various simplifications are often minimal, and approximations based on these simplifications are justified from a pragmatic point of view. We also note that some of the best approximations frequently differ from those currently undertaken in practice. For example, using an approximation that deletes an edge can cause significant changes whilst allowing dependence only on subsets of levels, which performs much better.

\section{Discussion}
\label{sec:discussion}

We have demonstrated here how the properties of variation distance can be harnessed to study the robustness of a discrete BN, if certain target variables are known a priori to be those of primary interest. We took advantage of the fact that the variation distance naturally embeds conditional independence relationships between variables. This observation enabled us to devise a seamless way of looking at perturbed versions of a BN for various sensitivity problems, including edge strength, node influence, level amalgamation and asymmetric dependence. An implementation of the developed methods is freely available to practitioners through the \texttt{bnmonitor} R package.

Much work is still to be undertaken, starting with refining the bounds we have developed here. Similarly, within this paper, we have had no space to consider the robustness of the choice of probability distribution on the entries of the CPTs of a BN. In \citet{smith2010robustness}, BN robustness associated with the inputs of the distribution in terms of the local DeRobertis distance \citep{deroberts1981bayesian} is studied, while \citet{smith2012isoseparation} provided bounds on posterior variation distances. Therefore, a reasonably straightforward extension to the variation bounds we have presented here can be developed by carefully combining our results with the local DeRobertis distance to provide a comprehensive robustness analysis. 

Our ideas also apply directly to dynamic BNs where the system's sensitivity can be even more critical because the dynamic nature of the problem makes the model much more complex. Throughout this paper, for simplicity, we have considered only robustness as it applies to finite categorical BNs. However, using the approach developed here, the variation distance on these highly structured and complex Markov processes can help us determine the robustness of dynamic BNs to dynamic effects. Furthermore, the diameter could be extended to perform sensitivity studies in mixed BNs, where both categorical and continuous variables are considered.

\bibliographystyle{elsarticle-harv} 
 \bibliography{cas-refs}






\appendix

\section{Auxiliary results}

In this appendix we present results that will be used in the proofs of the main propositions and theorems of the paper.

\begin{lemma}
\label{lemma:1}
Let $p$ and $p'$ be two pmfs over the same sample space $\mathbb{X}$. Then
\begin{equation}
\label{eq:lemma1}
\sum_{x\in\mathbb{X}}\min\{p(x),p'(x)\}= 1- d_V(p,p')
\end{equation}
\end{lemma}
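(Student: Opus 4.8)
The plan is to reduce the claimed identity to an elementary pointwise relation and then sum. First I would record the observation that for any two real numbers $a$ and $b$,
\[
\min\{a,b\} = \frac{a + b - |a-b|}{2},
\]
which follows immediately by checking the two cases $a \geq b$ and $a < b$ separately. Applying this with $a = p(x)$ and $b = p'(x)$ for each $x \in \mathbb{X}$ gives a pointwise expression for $\min\{p(x), p'(x)\}$ that separates the ``average'' part from the ``absolute difference'' part.

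Next I would sum this identity over all $x \in \mathbb{X}$ and split the resulting sum into three pieces:
\[
\sum_{x\in\mathbb{X}}\min\{p(x),p'(x)\} = \frac{1}{2}\sum_{x\in\mathbb{X}}p(x) + \frac{1}{2}\sum_{x\in\mathbb{X}}p'(x) - \frac{1}{2}\sum_{x\in\mathbb{X}}|p(x)-p'(x)|.
\]
Since $p$ and $p'$ are pmfs, the first two sums each equal $1$, while the last sum is exactly $d_V(p,p')$ by the definition of the total variation distance. Substituting these values gives $\tfrac{1}{2} + \tfrac{1}{2} - d_V(p,p') = 1 - d_V(p,p')$, which is the desired identity in Equation (\ref{eq:lemma1}).

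An alternative route I would keep in mind is a partition argument: split $\mathbb{X}$ into $A = \{x : p(x) \geq p'(x)\}$ and its complement $A^c$, observe that the minimum equals $p'(x)$ on $A$ and $p(x)$ on $A^c$, and use that $\sum_{x \in A}\bigl(p(x) - p'(x)\bigr) = \sum_{x \in A^c}\bigl(p'(x) - p(x)\bigr)$, which holds because both pmfs have total mass one. Both approaches are entirely elementary; the only step that warrants any care is verifying the pointwise minimum identity (equivalently, in the alternative, that the positive and negative parts of $p - p'$ carry equal mass). Neither presents a genuine obstacle, so I expect the proof to be short and purely computational.
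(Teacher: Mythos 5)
Your proposal is correct, and your primary route is a mild but genuine repackaging of the paper's argument. The paper proves the identity by partitioning $\mathbb{X}$ into $\mathbb{X}_1=\{x: p(x)\leq p'(x)\}$ and its complement $\mathbb{X}_2$, writing both $\sum_{x}\min\{p(x),p'(x)\}$ and $d_V(p,p')$ as sums over these two sets, and rearranging until each pmf contributes total mass $\tfrac12+\tfrac12$; this is exactly the ``alternative route'' you sketch at the end (up to which side carries the weak inequality, which is immaterial). Your main route instead invokes the pointwise identity $\min\{a,b\}=\tfrac12\left(a+b-|a-b|\right)$ and sums over $\mathbb{X}$, which collapses the paper's bookkeeping into a single line: the case analysis is absorbed into the verification of the scalar identity, and the partition of the sample space never needs to be named. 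Both arguments are complete and elementary; yours is shorter and arguably more transparent, while the paper's version makes explicit where the excess mass of each pmf lies, though nothing in the subsequent lemmas depends on that extra detail.
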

\proof{}
By definition of total variation distance we have
\begin{multline*}
\sum_{x\in\mathbb{X}}\min\{p(x),p'(x)\} + d_V(p,p')  =  \sum_{x\in\mathbb{X}}\min\{p(x),p'(x)\} \\+ \frac{1}{2}\sum_{x\in\mathbb{X}}|p(x)-p'(x)|.
\end{multline*}
Let $\mathbb{X}_1$ be the subset of $\mathbb{X}$ including those $x$ such that $p(x)\leq p'(x)$ and $\mathbb{X}_2=\mathbb{X}\setminus \mathbb{X}_1$.
Therefore
\begin{align*}
    \sum_{x\in\mathbb{X}}\min\{p(x),p'(x)\} + d_V(p,p')  = &\sum_{x\in\mathbb{X}_1}p(x)+\sum_{x\in\mathbb{X}_2}p'(x)+\\
    & \frac{1}{2}\sum_{x\in\mathbb{X}_1}(p'(x)-p(x))+\frac{1}{2}\sum_{x\in\mathbb{X}_2}(p(x)-p'(x))\\
    = & \frac{1}{2}\sum_{x\in\mathbb{X}_1}p(x)+\frac{1}{2}\sum_{x\in\mathbb{X}_1}p'(x)+\\
    &\frac{1}{2}\sum_{x\in\mathbb{X}_2}p'(x)+\frac{1}{2}\sum_{x\in\mathbb{X}_2}p(x)\\
    = & \frac{1}{2}+\frac{1}{2}=1
\end{align*}
\endproof

\begin{lemma}
Let $p$ and $p'$ be two pmfs over the same sample space $\mathbb{X}$. Call
\begin{align*}
    \beta & = 1- d_V(p,p')\\
    p^* &=\frac{\min\{p,p'\}}{\beta}\\
    \bar{p}&=\frac{p-\min\{p,p'\}}{1-\beta}\\
    \bar{p}'&=\frac{p'-\min\{p,p'\}}{1-\beta}
\end{align*}
Then 
\begin{equation}
    \label{eq:lemma2}
p=\beta p^* + (1-\beta)\bar{p}
\end{equation}
and similarly for $p'$
\label{lemma:2}
\end{lemma}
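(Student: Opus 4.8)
The plan is to verify the claimed identity by direct substitution and then confirm that the auxiliary quantities $p^*$, $\bar{p}$, $\bar{p}'$ are genuine pmfs, so that (\ref{eq:lemma2}) really is a convex (mixture) decomposition of $p$. The identity itself requires no effort: substituting the definitions of $p^*$ and $\bar{p}$ into the right-hand side of (\ref{eq:lemma2}), the weights $\beta$ and $1-\beta$ cancel against the denominators, leaving
\[
\beta p^* + (1-\beta)\bar{p} = \min\{p,p'\} + \bigl(p - \min\{p,p'\}\bigr) = p,
\]
and the symmetric computation with $\bar{p}'$ recovers $p'$.

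The substantive step is checking that $p^*$, $\bar{p}$ and $\bar{p}'$ are pmfs. Non-negativity is pointwise and immediate: $\min\{p,p'\}\ge 0$ gives $p^*\ge 0$, while the pointwise inequalities $p\ge\min\{p,p'\}$ and $p'\ge\min\{p,p'\}$ give $\bar{p},\bar{p}'\ge 0$, using that both denominators are positive. For normalization I would appeal directly to Lemma \ref{lemma:1}, which gives $\sum_{x}\min\{p(x),p'(x)\}=1-d_V(p,p')=\beta$; hence $\sum_x p^*(x)=\beta/\beta=1$. Summing the numerator of $\bar{p}$ and using $\sum_x p(x)=1$ then yields $\sum_x\bigl(p(x)-\min\{p(x),p'(x)\}\bigr)=1-\beta$, so $\sum_x\bar{p}(x)=(1-\beta)/(1-\beta)=1$, and identically for $\bar{p}'$. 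Thus (\ref{eq:lemma2}) exhibits $p$ and $p'$ as sharing the common component $p^*$ with weight $\beta$ and differing only in their residual components $\bar{p}$, $\bar{p}'$.

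The only obstacle is the degenerate behaviour at the two extremes of the $\beta$ range, where the stated formulae divide by zero. If $\beta=1$ (equivalently $p=p'$) then $\bar{p}$ and $\bar{p}'$ are undefined, but the decomposition holds trivially with $p=p^*=p'$; if $\beta=0$ (disjoint supports, $d_V=1$) then $p^*$ is undefined and one has $p=\bar{p}$, $p'=\bar{p}'$. I would therefore state the decomposition for $0<\beta<1$, which is exactly the range relevant to the maximal-coupling argument that this lemma is built to support, and merely note the boundary cases in passing. Everything beyond invoking Lemma \ref{lemma:1} to identify the mass $\beta$ of the overlap $\min\{p,p'\}$ is routine arithmetic.
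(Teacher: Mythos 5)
Your proposal is correct and follows essentially the same route as the paper's proof: the identity is verified by direct substitution, and the normalization of $p^*$ and $\bar{p}$ is obtained from Lemma \ref{lemma:1} exactly as the paper does. Your additional attention to non-negativity and to the degenerate cases $\beta=0$ and $\beta=1$ (where the stated formulae divide by zero) is a small but genuine improvement over the paper's version, which passes over both points silently.
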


\proof{}
First notice that from Lemma \ref{lemma:1}, both $p^*$ and $\bar{p}$ are pmfs. For $p^*$ this follows directly from (\ref{eq:lemma1}). For $\bar{p}$ notice that
\begin{align*}
\sum_{x\in\mathbb{X}}\frac{p(x)-\min\{p(x),p'(x)\}}{1-\beta} &=\frac{1}{d_V(p,p')}\sum_{x\in\mathbb{X}}\left(p(x)-\min\{p(x),p'(x)\}\right)\\
&=\frac{1}{d_V(p,p')}(1-(1-d_V(p,p')))=1.
\end{align*}
By substituting the definitions of $p^*$ and $\bar{p}$ into (\ref{eq:lemma2}) the result follows.
\endproof

\begin{lemma}
\label{lemma:3}
Let $p,q_1,\dots,q_{|\mathbb{X}|}$ be pmfs over the same sample space $\mathbb{X}$. Let $\pi=(\pi_1,\dots,\pi_{|\mathbb{X}|})\in\Delta_{|\mathbb{X}|-1}$ and $q_\pi =\sum_{i=1}^{|\mathbb{X}|}\pi_iq_i$. Then
\[
d_V(p,q_\pi)\leq \sum_{i=1}^{|\mathbb{X}|}\pi_id_V(p,q_i)
\]
\end{lemma}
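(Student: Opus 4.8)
The plan is to exploit the fact that total variation distance is, up to the factor $\tfrac12$, an $\ell_1$ norm applied to the difference of the two pmfs, and that $\ell_1$ norms are convex via the ordinary triangle inequality for absolute values. The whole argument reduces to pulling the convex combination outside the absolute value and then interchanging two finite sums.

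Concretely, I would start from the definition, writing $d_V(p,q_\pi)=\tfrac12\sum_{x\in\mathbb{X}}|p(x)-q_\pi(x)|$ and substituting $q_\pi(x)=\sum_{i=1}^{|\mathbb{X}|}\pi_i q_i(x)$. The key trick is to use that $\pi\in\Delta_{|\mathbb{X}|-1}$, so $\sum_i\pi_i=1$, which lets me rewrite $p(x)=\sum_i\pi_i\,p(x)$. Then the quantity inside the absolute value becomes $\sum_i\pi_i\bigl(p(x)-q_i(x)\bigr)$, a genuine convex combination of signed differences.

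At this point I would apply the triangle inequality together with $\pi_i\ge 0$ to obtain $\bigl|\sum_i\pi_i(p(x)-q_i(x))\bigr|\le\sum_i\pi_i|p(x)-q_i(x)|$ for each fixed $x$. Summing over $x$, scaling by $\tfrac12$, and interchanging the two finite sums (which is unconditionally valid for finite index sets) yields $\sum_i\pi_i\cdot\tfrac12\sum_{x}|p(x)-q_i(x)|=\sum_i\pi_i\,d_V(p,q_i)$, which is exactly the claimed bound.

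There is no genuine obstacle here: the statement is simply the convexity of $q\mapsto d_V(p,q)$, and every step is elementary. The only point requiring a moment of care is the initial rewriting of $p(x)$ as its own convex combination using $\sum_i\pi_i=1$, since without it the triangle inequality would not line up against the mixed term $q_\pi$; once that identity is in place, the triangle inequality and the harmless reordering of finite sums finish the proof immediately.
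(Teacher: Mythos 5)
Your proof is correct and is essentially identical to the paper's own argument: both rewrite $p(x)=\sum_{i}\pi_i p(x)$ using $\sum_i\pi_i=1$, apply the triangle inequality pointwise to $\bigl|\sum_i\pi_i(p(x)-q_i(x))\bigr|$, and then interchange the two finite sums to recover $\sum_i\pi_i d_V(p,q_i)$. There are no gaps.
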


\begin{proof}
\begin{align*}
    d_V(p,q_\pi)&=\frac{1}{2}\sum_{x\in\mathbb{X}}|p(x)-q_\pi(x)|= \frac{1}{2}\sum_{x\in\mathbb{X}}\left|p(x)-\sum_{i=1}^{|\mathbb{X}|}\pi_iq_i(x)\right|\\
    &= \frac{1}{2}\sum_{x\in\mathbb{X}}\left|\sum_{i=1}^{|\mathbb{X}|}\pi_ip(x)-\sum_{i=1}^{|\mathbb{X}|}\pi_iq_i(x)\right|= \frac{1}{2}\sum_{x\in\mathbb{X}}\left|\sum_{i=1}^{|\mathbb{X}|}\pi_i(p(x)-q_i(x))\right|\\
    &\leq \frac{1}{2}\sum_{x\in\mathbb{X}}\sum_{i=1}^{|\mathbb{X}|}\pi_i|p(x)-q_i(x)|=\sum_{i=1}^{|\mathbb{X}|}\pi_i \frac{1}{2}\sum_{x\in\mathbb{X}}|p(x)-q_i(x)|\\
    &=\sum_{i=1}^{|\mathbb{X}|}\pi_id_V(p,q_i)
\end{align*}
\end{proof}

For the next lemma we consider two categorical random variables $X$ and $Y$  with sample spaces $\mathbb{X}$, $\mathbb{Y}$, respectively. With $p(X)$, $p(X,Y)$ and $p(Y|x)$  we denote the marginal, joint and conditional probability distributions over the corresponding variables. The same notation is used for another pmf $p'$.

\begin{lemma}
It holds that
    \label{lemma:0}
    \[
    d_V(p(X,Y),p'(X,Y))\leq \min\{d_V(p(X),p'(X)) + \max_{x\in\mathbb{X}}d_V(p(Y|x),p'(Y|x)),1\}
    \]
\end{lemma}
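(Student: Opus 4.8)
The plan is to separate the two sources of discrepancy between the joint distributions---the difference in the marginals of $X$ and the difference in the conditionals of $Y$ given $X$---and to control each one by a triangle inequality. The bound by $1$ inside the minimum is immediate, since any total variation distance is at most $1$; so it suffices to establish that $d_V(p(X,Y),p'(X,Y))$ is bounded by $d_V(p(X),p'(X)) + \max_{x\in\mathbb{X}}d_V(p(Y|x),p'(Y|x))$.

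First I would introduce the hybrid pmf $q(x,y)=p'(x)p(y|x)$, which combines the $X$-marginal of $p'$ with the conditional of $p$. By the triangle inequality for $d_V$,
\[
d_V(p(X,Y),p'(X,Y))\leq d_V(p(X,Y),q(X,Y))+d_V(q(X,Y),p'(X,Y)).
\]
For the first summand, writing $p(x,y)=p(x)p(y|x)$ and factoring out $p(y|x)$ gives, using $\sum_{y}p(y|x)=1$,
\[
d_V(p,q)=\frac{1}{2}\sum_{x,y}p(y|x)\,|p(x)-p'(x)|=\frac{1}{2}\sum_{x}|p(x)-p'(x)|=d_V(p(X),p'(X)).
\]
For the second summand, factoring out $p'(x)$ gives
\[
d_V(q,p')=\sum_{x}p'(x)\,\frac{1}{2}\sum_{y}|p(y|x)-p'(y|x)|=\sum_{x}p'(x)\,d_V(p(Y|x),p'(Y|x)),
\]
which, being a convex combination of the per-context distances with weights $p'(x)$, is at most $\max_{x\in\mathbb{X}}d_V(p(Y|x),p'(Y|x))$. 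Summing the two bounds yields the claim.

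There is no serious obstacle here; the argument is an elementary triangle inequality. The only care needed is in choosing the intermediate distribution $q$ so that each of the two terms isolates exactly one factor (one could equally well take $q(x,y)=p(x)p'(y|x)$ and interchange the roles of the two terms), and in recognising the second term as a weighted average that is dominated by its maximum. This lemma then furnishes the multi-variable counterpart of Theorem \ref{theo:1}, and in particular specialises to it when the conditionals agree exactly.
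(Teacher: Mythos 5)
Your proof is correct and is essentially the paper's own argument: the paper's inline decomposition $p(x,y)-p'(x,y)=p(y|x)\,(p(x)-p'(x))+r(y|x)\,p'(x)$, with $r(y|x)=p(y|x)-p'(y|x)$, is exactly your interposition of the hybrid $q(x,y)=p'(x)p(y|x)$, followed by the same triangle inequality, the same factoring of each term, and the same bounding of the $p'(x)$-weighted average by its maximum. (Only your closing aside is slightly off: Theorem \ref{theo:1} bounds the distance between the $Y$-marginals and carries the diameter factor, so it is not a direct specialisation of this lemma, but that remark plays no role in your proof.)
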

\begin{proof}
First notice that
\begin{align*}
      d_V(p(X,Y),p'(X,Y))&=\frac{1}{2}\sum_{x\in\mathbb{X},y\in\mathbb{Y}}|p(x,y)-p'(x,y)|\\
      &=\frac{1}{2}\sum_{x\in\mathbb{X},y\in\mathbb{Y}}|p(y|x)p(x)-p'(y|x)p(x)|.
\end{align*}
Calling $r(y|x)=p(y|x)-p'(y|x)$, we have
\begin{align*}
    d_V(p(X,Y),p'(X,Y))=&\hspace{0.3cm}\frac{1}{2}\sum_{x\in\mathbb{X},y\in\mathbb{Y}}|p(y|x)(p(x)-p'(x))+r(y|x)p'(x)|\\
    \leq&\hspace{0.3cm}\frac{1}{2}\sum_{\substack{x\in\mathbb{X}\\y\in\mathbb{Y}}}|p(y|x)(p(x)-p'(x))|+\frac{1}{2}\sum_{\substack{x\in\mathbb{X}\\y\in\mathbb{Y}}}|r(y|x)p'(x)|\\
    =&\hspace{0.3cm}\frac{1}{2}\sum_{x\in\mathbb{X}}|p(x)-p'(x)|\sum_{y\in\mathbb{Y}}p(y|x) + \\
     &\hspace{0.3cm}\frac{1}{2}\sum_{x\in\mathbb{X}}p'(x)\sum_{y\in\mathbb{Y}}|r(y|x)|\\
     \leq&\hspace{0.3cm} \frac{1}{2}\sum_{x\in\mathbb{X}}|p(x)-p'(x)| + \frac{1}{2}\max_{x\in\mathbb{X}}\sum_{y\in\mathbb{Y}}|r(y|x)|\\
     =&\hspace{0.3cm}d_V(p(X),p'(X))+\max_{x\in\mathbb{X}}d_V(p(Y|x),p'(Y|x)).
\end{align*}
Since the maximum value for the total variation distance is one, the result follows.
\end{proof}

\section{Proofs}
\subsection{Proof of Proposition \ref{prop:1}}
For the $(\Rightarrow)$ statement, we notice that if $d^+(P_{Y|X})=0$ then any two rows of $P_{Y|X}$ must be the same. Thus $p(Y|x)=p(Y)$ for all $x\in\mathbb{X}$. Thus $Y\independent X$. For the $(\Leftarrow)$ statement, if $Y\independent X$ then $p(Y|x)=p(Y)$ for all $x\in\mathbb{X}$. Thus every row of $P_{Y|X}$ is equal to $p(Y)$ and its diameter is zero.

 \subsection{Proof of Proposition \ref{prop:2}}

Consider any two $x,x'\in\mathbb{X}$. Using twice Lemma \ref{lemma:3} we have that 
\begin{align*}
    d_V(p(Y|x),p(Y|x')) & =d_V\left(\sum_{z\in\mathbb{Z}}p(Y|x,z)p(z|x),\sum_{z'\in\mathbb{Z}}p(Y|x',z')p(z'|x')\right) \\
   & \leq \sum_{z'\in\mathbb{Z}}p(z'|x')d_V\left(\sum_{z\in\mathbb{Z}}p(Y|x,z)p(z|x),p(Y|x',z')\right)\\
   &\leq \sum_{z'\in\mathbb{Z}}p(z'|x')\sum_{z\in\mathbb{Z}}p(z|x)d_V\left(p(Y|x,z),p(Y|x',z')\right)\\
   &\leq d^+(P_{Y|(x,x')Z})\sum_{z'\in\mathbb{Z}}p(z'|x')\sum_{z\in\mathbb{Z}}p(z|x)\\
   & = d^+(P_{Y|(x,x')Z}),
\end{align*}
where $P_{Y|(x,x')Z}$ is the CPT of $Y$ conditional on all $z\in\mathbb{Z}$ but only on $x,x'\in\mathbb{X}$. The result follows since  \[
d^+(P_{Y|X})=\max_{x,x'\in\mathbb{X}}d_V(p(Y|x),p(Y|x'))\leq \max_{x,x'\in\mathbb{X}} d^+(P_{Y|(x,x')Z})\leq d^+(P_{Y|XZ}).
\]
The proof about the equality between diameters follows a similar reasoning to that of Proposition \ref{prop:1}.

\subsection{Proof of Proposition \ref{prop:3}}
For any two $x,x'\in\mathbb{X}$, Lemma \ref{lemma:0} can be rewritten as
\begin{multline*}
    d_V(p(Y,Z|x), p'(Y,Z|x'))  \leq d_V(p(Z|x),p'(Z|x')) + \\\max_{z\in\mathbb{Z}} d_V(p(Y|z,x), p'(Y|z,x')).
\end{multline*}
Notice that $\max_{z\in\mathbb{Z}} d_V(p(Y|z,x), p'(Y|z,x'))\leq d^+(P_{Y|ZX})$ and therefore
\begin{multline*}
    d^+(P_{YZ|X})=\max_{x,x'\in\mathbb{X}}  d_V(p(Y,Z|x), p'(Y,Z|x')) \\ \leq d^+(P_{Y|ZX}) + \max_{x,x'\in\mathbb{X}} d_V(p(Z|x),p'(Z|x')).
\end{multline*}
Noticing that $\max_{x,x'\in\mathbb{X}} d_V(p(Z|x),p'(Z|x'))=d^+(P_{Z|X})$, the result follows.
 
\subsection{Proof of Theorem \ref{theo:1}}

Consider the total variation distance between $p(Y)$ and $p'(Y)$. Using the definition of conditional probability this is equal to
\[
d_V(p(Y),p'(Y))=d_V(p(X)P_{Y|X},p'(X)P_{Y|X})
\]
Let $P_{Y|X}(x)$ be the $x$-th row of $P_{Y|X}$ and notice that $p(X)P_{Y|X}=\sum_{x\in\mathbb{X}}p(x)P_{Y|X}(x)$.
Using Lemma \ref{lemma:2} we now derive that
\begin{align*}
    d_V(p(Y),p'(Y)) & = 
    d_V((\beta p^* + (1-\beta)\bar{p})P_{Y|X},(\beta p^* + (1-\beta)\bar{p}')P_{Y|X})\\
    & = \frac{1}{2}\sum_{x\in\mathbb{X}}|(1-\beta)(\bar{p}(x)-\bar{p}'(x))P_{Y|X}(x)|\\
    & = \frac{1}{2}\sum_{x\in\mathbb{X}}|1-\beta||(\bar{p}(x)-\bar{p}'(x))P_{Y|X}(x)|\\
    & = |1-\beta|d_V(\bar{p}P_{Y|X},\bar{p}'P_{Y|X})\\
    &=(1-\beta)d_V(\bar{p}P_{Y|X},\bar{p}'P_{Y|X}),
\end{align*}
where the last equality follows by the fact that the total variation distance is at most 1.

Using again the fact that $p(X)P_{Y|X}=\sum_{x\in\mathbb{X}}p(x)P_{Y|X}(x)$ together with Lemma \ref{lemma:3}, we get that
\begin{align*}
    d_V(p(Y),p'(Y)) & = (1-\beta)d_V(\bar{p}P_{Y|X},\bar{p}'P_{Y|X})\\
    &=(1-\beta)d_V\left(\sum_{x\in\mathbb{X}}\bar{p}(x)P_{Y|X}(x),\sum_{x'\in\mathbb{X}}\bar{p}'(x')P_{Y|X}(x')\right)\\
    &\leq (1-\beta)\sum_{x'\in\mathbb{X}}\bar{p}'(x')d_V\left(\sum_{x\in\mathbb{X}}\bar{p}(x)P_{Y|X}(x),P_{Y|X}(x')\right)\\
    &\leq (1-\beta)\sum_{x'\in\mathbb{X}}\sum_{x\in\mathbb{X}}\bar{p}'(x')\bar{p}(x)d_V\left(P_{Y|X}(x'),P_{Y|X}(x)\right)\\
    &\leq (1-\beta)\sum_{x'\in\mathbb{X}}\sum_{x\in\mathbb{X}}\bar{p}'(x')\bar{p}(x)d^+_{Y|X},
\end{align*}
where in the last inequality we used the definition of diameter of a CPT. Using the fact that $\sum_{x'\in\mathbb{X}}\sum_{x\in\mathbb{X}}\bar{p}'(x')\bar{p}(x)=1$ and that $1-\beta=d_V(p(X),p'(X))$, the result follows.

\section{Refining a Bayesian networks with asymmetric independence}

In this appendix we showcase the use of a context-specific class of models, namely staged trees \citep{collazo2018chain,smith2008conditional}, to learn conditional independence relationships from data that are not simply symmetric. Here we simply give an illustration of how a staged tree would represent the structure of the CPT of the variable \texttt{EMPUD} in the \texttt{istat} BN. With this aim we construct an event tree with the parents of \texttt{EMPUD}, namely \texttt{GP} and \texttt{EMP12}, and \texttt{EMPUD} itself, reported in Figure \ref{fig:enter-label}.

Each vertex at depth two in this tree represents the conditional distribution of \texttt{EMPUD} for a specific combination of its parents: it thus represents a row of the CPT of \texttt{EMPUD}. A staged tree learning algorithm then aims at finding the best partition of these vertices, where two vertices are in the same set if their pmfs are assumed to be equal. When this happens two vertices are said to be in the same \textit{stage} and are equally colored in the tree graph. We used the \texttt{stagedtrees} R package \citep{carli2022r} to learn an optimal partition of the vertices associated to the pmf of \texttt{EMPUD}, reported in the coloring of the tree in Figure \ref{fig:enter-label}. It can be noticed that the only two vertices that are given the same color are for the context \texttt{GP} = No and two levels of \texttt{EMP12}. This correspond to the setup where the index of partial independence was really low in Table \ref{table:asy}. The fact that no other pair of vertices have the same color suggests that there are no other equalities between the rows of the CPTs of \texttt{EMPUD}, confirming what suggested by the indices in Table \ref{table:asy}.

\label{sec:staged}
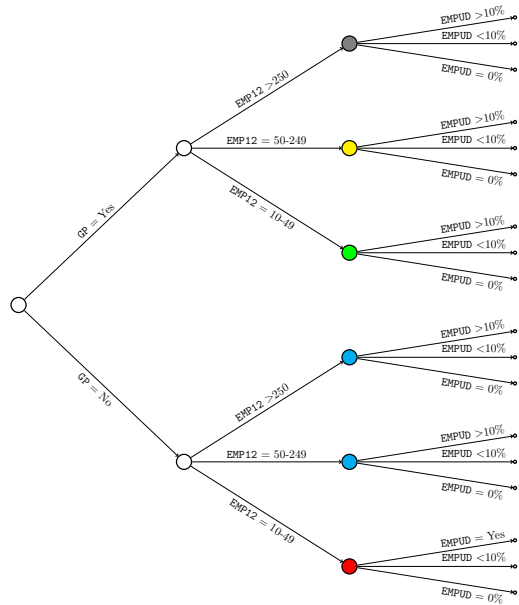
\begin{figure}
    \centering
    \scalebox{0.4}{
    \begin{tikzpicture}[auto, scale=10,
	ORG_NA/.style={circle,inner sep=1mm,minimum size=0.5cm,draw,black,very thick,fill=white,text=black},
	EMPUD_1/.style={circle,inner sep=1mm,minimum size=0.5cm,draw,black,very thick,fill={rgb,255:red,0; green,0; blue,0},text=black},
	EMPUD_2/.style={circle,inner sep=1mm,minimum size=0.5cm,draw,black,very thick,fill=cyan,text=black},
	INPD_1/.style={circle,inner sep=1mm,minimum size=0.5cm,draw,black,very thick,fill=red,text=black},
	INPD_3/.style={circle,inner sep=1mm,minimum size=0.5cm,draw,black,very thick,fill=green,text=black},
	INPD_4/.style={circle,inner sep=1mm,minimum size=0.5cm,draw,black,very thick,fill=yellow,text=black},
	INPD_6/.style={circle,inner sep=1mm,minimum size=0.5cm,draw,black,very thick,fill=gray,text=black},
	RR_1/.style={circle,inner sep=1mm,minimum size=0.5cm,draw,black,very thick,fill=red,text=black},
	RR_12/.style={circle,inner sep=1mm,minimum size=0.5cm,draw,black,very thick,fill=yellow,text=black},
	RR_3/.style={circle,inner sep=1mm,minimum size=0.5cm,draw,black,very thick,fill=cyan,text=black},
	RR_7/.style={circle,inner sep=1mm,minimum size=0.5cm,draw,black,very thick,fill=green,text=black},
	RR_9/.style={circle,inner sep=1mm,minimum size=0.5cm,draw,black,very thick,fill=pink,text=black},
	RR_11/.style={circle,inner sep=1mm,minimum size=0.5cm,draw,black,very thick,fill=gray,text=black},
	 leaf/.style={circle,inner sep=0.3mm,minimum size=0.1cm,draw,very thick,black,fill=white,text=black}]

	\node [ORG_NA] (root) at (0.000000*\xx, 0.500000*\yy)	{};
	\node [ORG_NA] (root-No) at (0.250000*\xx, 0.239130*\yy)	{};
	\node [ORG_NA] (root-Yes) at (0.250000*\xx, 0.760870*\yy)	{};
	\node [INPD_1] (root-No-0) at (0.500000*\xx, 0.065217*\yy)	{};
	\node [EMPUD_2] (root-No-<10) at (0.500000*\xx, 0.239130*\yy)	{};
	\node [EMPUD_2] (root-No->10) at (0.500000*\xx, 0.413043*\yy)	{};
	\node [INPD_3] (root-Yes-0) at (0.500000*\xx, 0.586957*\yy)	{};
	\node [INPD_4] (root-Yes-<10) at (0.500000*\xx, 0.760870*\yy)	{};
	\node [INPD_6] (root-Yes->10) at (0.500000*\xx, 0.934783*\yy)	{};
 	\node [leaf] (root-No-0-No) at (0.750000*\xx, 0.021739*\yy)	{};
    \node [leaf] (root-No-0-ok) at (0.750000*\xx, 0.065217*\yy)	{};
	\node [leaf] (root-No-0-Yes) at (0.750000*\xx, 0.108696*\yy)	{};
	\node [leaf] (root-No-<10-No) at (0.750000*\xx, 0.195652*\yy)	{};
 \node [leaf] (root-No-<10-ok) at (0.750000*\xx, 0.239130*\yy)	{};
	\node [leaf] (root-No-<10-Yes) at (0.750000*\xx, 0.282609*\yy)	{};
	\node [leaf] (root-No->10-No) at (0.750000*\xx, 0.369565*\yy)	{};
 \node [leaf] (root-No->10-ok) at (0.750000*\xx, 0.413043*\yy)	{};
	\node [leaf] (root-No->10-Yes) at (0.750000*\xx, 0.456522*\yy)	{};
	\node [leaf] (root-Yes-0-No) at (0.750000*\xx, 0.543478*\yy)	{};
 \node [leaf] (root-Yes-0-ok) at (0.750000*\xx, 0.586957*\yy)	{};
	\node [leaf] (root-Yes-0-Yes) at (0.750000*\xx, 0.630435*\yy)	{};
	\node [leaf] (root-Yes-<10-No) at (0.750000*\xx, 0.717391*\yy)	{};
 \node [leaf] (root-Yes-<10-ok) at (0.750000*\xx, 0.7608670*\yy)	{};
	\node [leaf] (root-Yes-<10-Yes) at (0.750000*\xx, 0.804348*\yy)	{};
	\node [leaf] (root-Yes->10-No) at (0.750000*\xx, 0.891304*\yy)	{};
 	\node [leaf] (root-Yes->10-ok) at (0.750000*\xx, 0.934783*\yy)	{};
	\node [leaf] (root-Yes->10-Yes) at (0.750000*\xx, 0.978261*\yy)	{};

	\draw[->] (root) -- node [sloped,below]{\small{\texttt{GP} = No}} (root-No);
	\draw[->] (root) -- node [sloped]{\small{\texttt{GP} = Yes}} (root-Yes);
	\draw[->] (root-No) -- node [sloped,below]{\small{\texttt{EMP12} = 10-49}} (root-No-0);
	\draw[->] (root-No) -- node [sloped]{\small{\texttt{EMP12} = 50-249}} (root-No-<10);
	\draw[->] (root-No) -- node [sloped]{\small{\texttt{EMP12} $>$250}} (root-No->10);
	\draw[->] (root-Yes) -- node [sloped,below]{\small{\texttt{EMP12} = 10-49}} (root-Yes-0);
	\draw[->] (root-Yes) -- node [sloped]{\small{\texttt{EMP12} = 50-249}} (root-Yes-<10);
	\draw[->] (root-Yes) -- node [sloped]{\small{\texttt{EMP12} $>$250}} (root-Yes->10);
	\draw[->] (root-No-0) -- node [sloped,below,pos=0.75]{\small{\texttt{EMPUD} = 0\%}} (root-No-0-No);
 \draw[->] (root-No-0) -- node [sloped,pos=0.75]{\small{\texttt{EMPUD} $<$10\%}} (root-No-0-ok);

	\draw[->] (root-No-0) -- node [sloped,pos=0.75]{\small{\texttt{EMPUD} = Yes}} (root-No-0-Yes);
	\draw[->] (root-No-<10) -- node [sloped,below,pos=0.75]{\small{\texttt{EMPUD} = 0\%}} (root-No-<10-No);
 \draw[->] (root-No-<10) -- node [sloped,above,pos=0.75]{\small{\texttt{EMPUD} $<$10\%}} (root-No-<10-ok);
	\draw[->] (root-No-<10) -- node [sloped,pos=0.75]{\small{\texttt{EMPUD} $>$10\%}} (root-No-<10-Yes);
	\draw[->] (root-No->10) -- node [sloped,below,pos=0.75]{\small{\texttt{EMPUD} = 0\%}} (root-No->10-No);
 \draw[->] (root-No->10) -- node [sloped,pos=0.75]{\small{\texttt{EMPUD} $<$10\%}} (root-No->10-ok);
	\draw[->] (root-No->10) -- node [sloped,pos=0.75]{\small{\texttt{EMPUD} $>$10\%}} (root-No->10-Yes);
	\draw[->] (root-Yes-0) -- node [sloped,below,pos=0.75]{\small{\texttt{EMPUD} = 0\%}} (root-Yes-0-No);
 	\draw[->] (root-Yes-0) -- node [sloped,pos=0.75]{\small{\texttt{EMPUD} $<$10\%}} (root-Yes-0-ok);
	\draw[->] (root-Yes-0) -- node [sloped,pos=0.75]{\small{\texttt{EMPUD} $>$10\%}} (root-Yes-0-Yes);
	\draw[->] (root-Yes-<10) -- node [sloped,below,pos=0.75]{\small{\texttt{EMPUD} = 0\%}} (root-Yes-<10-No);
 	\draw[->] (root-Yes-<10) -- node [sloped,pos=0.75]{\small{\texttt{EMPUD} $<$10\%}} (root-Yes-<10-ok);
	\draw[->] (root-Yes-<10) -- node [sloped,pos=0.75]{\small{\texttt{EMPUD} $>$10\%}} (root-Yes-<10-Yes);
	\draw[->] (root-Yes->10) -- node [sloped,below,pos=0.75]{\small{\texttt{EMPUD} = 0\%}} (root-Yes->10-No);
 	\draw[->] (root-Yes->10) -- node [sloped,pos=0.75]{\small{\texttt{EMPUD} $<$10\%}} (root-Yes->10-ok);
	\draw[->] (root-Yes->10) -- node [sloped,pos=0.75]{\small{\texttt{EMPUD} $>$10\%}} (root-Yes->10-Yes);
\end{tikzpicture}}
    \caption{Staged tree for the variable \texttt{EMPUD} conditional on its parents in the \texttt{istat} BN.}
    \label{fig:enter-label}
\end{figure}
\end{document}